\documentclass[aps,prx,twocolumn,superscriptaddress,showpacs,amsfonts,amsmath,floatfix,10pt]{revtex4-2}

\usepackage{amsmath, amsfonts, amssymb, amsthm, bbm}
\usepackage[normalem]{ulem}
\usepackage{dsfont}
\usepackage[T1]{fontenc}
\usepackage{enumitem}
\usepackage{ifpdf}

\usepackage{bm}
\usepackage{mathtools}
\usepackage{graphicx}
\usepackage{epstopdf}
\usepackage{epsfig}
\usepackage{dcolumn}
\usepackage{bm}
\usepackage{times}
\usepackage{physics}
\usepackage{graphicx}
\usepackage{booktabs}
\usepackage{tikz}
\usepackage{epsfig}
\usepackage{listings}
\usepackage[table,xcdraw]{xcolor}
\usepackage{multirow}
\usepackage{colortbl}

\usepackage{pifont}
\usetikzlibrary{svg.path} 
\newtheorem{defi}{Definition}

\newtheorem{lemma}{Lemma}
\newtheorem{corollary}{Corollary}
\newtheorem{theorem}{Theorem}

\definecolor{orcidlogocol}{HTML}{A6CE39}
\tikzset{
  orcidlogo/.pic={
    \fill[orcidlogocol] svg{M256,128c0,70.7-57.3,128-128,128C57.3,256,0,198.7,0,128C0,57.3,57.3,0,128,0C198.7,0,256,57.3,256,128z};
    \fill[white] svg{M86.3,186.2H70.9V79.1h15.4v48.4V186.2z}
                 svg{M108.9,79.1h41.6c39.6,0,57,28.3,57,53.6c0,27.5-21.5,53.6-56.8,53.6h-41.8V79.1z M124.3,172.4h24.5c34.9,0,42.9-26.5,42.9-39.7c0-21.5-13.7-39.7-43.7-39.7h-23.7V172.4z}
                 svg{M88.7,56.8c0,5.5-4.5,10.1-10.1,10.1c-5.6,0-10.1-4.6-10.1-10.1c0-5.6,4.5-10.1,10.1-10.1C84.2,46.7,88.7,51.3,88.7,56.8z};
  }
}
\makeatletter
\newcommand{\@OrigHeightRecip}{0.00390625}

\newlength{\@curXheight}

\newcommand{\@preventExternalization}{%
\ifcsname tikz@library@external@loaded\endcsname%
\tikzset{external/export next=false}\else\fi%
}

\newcommand{\orcidlogo}{%
\texorpdfstring{%
\setlength{\@curXheight}{\fontcharht\font`X}%
\XeTeXLinkBox{%
\@preventExternalization%
\begin{tikzpicture}[yscale=-\@OrigHeightRecip*\@curXheight,
xscale=\@OrigHeightRecip*\@curXheight,transform shape]
\pic{orcidlogo};
\end{tikzpicture}%
}}{}}

\DeclareRobustCommand\orcidlinkX[3]{\href{https://orcid.org/#2}{%
\ifstrempty{#1}{}{#1\,}\orcidlogo\ifstrempty{#3}{}{\,#3}}}

\newcommand{\orcidlink}[1]{\ifpdf\orcidlinkX{}{#1}{}\fi}

\makeatother

\usepackage{xurl}
\usepackage{hyperref}
\hypersetup{breaklinks=true}

\begin{document}

\title{Classical simulation of circuits with realistic odd-dimensional  Gottesman-Kitaev-Preskill states}

\author{Cameron Calcluth\,\orcidlink{0000-0001-7654-9356}}
\email{calcluth@gmail.com}
\affiliation{Department of Microtechnology and Nanoscience (MC2), Chalmers University of Technology, SE-412 96 G\"{o}teborg, Sweden}
\author{Oliver Hahn\,\orcidlink{0000-0003-1677-8696}}
\affiliation{Department of Microtechnology and Nanoscience (MC2), Chalmers University of Technology, SE-412 96 G\"{o}teborg, Sweden}
\affiliation{Department of Basic Science, The University of Tokyo, 3-8-1 Komaba, Meguro-ku, Tokyo, 153-8902, Japan}
\author{Juani Bermejo-Vega\,\orcidlink{0000-0003-3727-8092}}
\affiliation{Departamento de Electromagnetismo y Física de la Materia, Avenida de la Fuente Nueva, 18071 Granada, Universidad de Granada, Granada, Spain}
\author{Alessandro Ferraro\,\orcidlink{0000-0002-7579-6336}}
\affiliation{Dipartimento di Fisica ``Aldo Pontremoli,''
Università degli Studi di Milano, I-20133 Milano, Italy
}
\affiliation{Centre for Theoretical Atomic, Molecular and Optical Physics, Queen's University Belfast, Belfast BT7 1NN, United Kingdom}
\author{Giulia Ferrini\,\orcidlink{0000-0002-7130-6723}}
\affiliation{Department of Microtechnology and Nanoscience (MC2), Chalmers University of Technology, SE-412 96 G\"{o}teborg, Sweden}

\begin{abstract}
Classically simulating circuits with bosonic codes is  challenging due to the prohibitive cost of simulating quantum systems with many, possibly infinite, energy levels. We propose an algorithm to simulate circuits with encoded Gottesman-Kitaev-Preskill (GKP) states, specifically for odd-dimensional encoded qudits.  Our approach is tailored to be especially effective in the most challenging but practically relevant regime, where the codeword states exhibit high (but finite) squeezing. Our algorithm leverages the Zak-Gross Wigner function introduced by J. Davis et al. [arXiv:2407.18394], which represents infinitely squeezed encoded stabilizer states positively. The runtime of the algorithm scales with the negativity of the Wigner function, allowing for efficient simulation of certain large-scale circuits — namely, input stabilizer GKP states undergoing generalized GKP-encoded Clifford operations followed by modular measurements — with a high degree of squeezing. For stabilizer GKP states exhibiting 12 dB of squeezing, our algorithm can simulate circuits with up to 1,000 modes with less than double the number of samples required for a single input mode, in stark contrast to existing simulators. 
 Therefore this approach holds significant potential for benchmarking early implementations of quantum computing architectures utilizing bosonic codes.
 \end{abstract}

\maketitle

Quantum computers are expected to solve certain types of problems faster than classical computers, a property known as quantum computational advantage~\cite{shor1999, bremner2010, aaronson2013, bouland2019}.
Although generally inefficient, classical simulation algorithms that can reproduce the outcome of a quantum computation remain desirable, even at the price of a potentially exponential runtime in the input size. Such algorithms can, for instance, simulate small-size or restricted types of quantum circuits, thereby allowing for benchmarking early implementations of quantum processors~\cite{arute2019,wu2021strong,zhu2022quantum, morvan2024phase}.

A paradigmatic example of such simulation algorithms is the Gottesman-Knill theorem \cite{gottesman1997,gottesman1999,gottesman1999a,hostens2005,bermejo-vega2014,Juani-thesis}, allowing for the efficient (\textit{i.e.}, poly-time) classical simulation of the restricted set of circuits with input qudit stabilizer states, Clifford operations, and Pauli measurements. Other approaches extend these simulation algorithms to address general quantum circuits at the price of runtimes higher than polynomial. For instance, one can use finite-dimension quasiprobability distributions such as the Gross Wigner function \cite{gross2006} to provide a sampling algorithm for odd-dimensional systems with runtime that scales with the negative volume of the quasiprobability distribution~\cite{pashayan2015, veitch2012, seddon2021, hahn2024b}. 

For bosonic systems --- which hold promise due to their quantum error correction capabilities~\cite{sivak2023real, brock2024quantumerrorcorrectionqudits, de2022error, konno2024logical,reglade2024quantum, ruiz2024ldpccatcodeslowoverheadquantum, ni2023beating} and scalability~\cite{yoshikawa2016, chen2014,cai2017,larsen2019,deng2023gaussian,madsen2022quantum, gouzien2023performance}  --- available simulation algorithms exploit the positivity of quasiprobability distributions~\cite{mari2012, veitch2013, rahimi-keshari2016}, tensor networks~\cite{liu2023simulating, oh2023tensor}, the stellar representation of non-Gaussian states~\cite{PhysRevResearch.3.033018,PhysRevLett.130.090602}, or decompositions into Gaussian states~\cite{bourassa2021fast, hahn2024c,dias2024classical}. These approaches have proven effective for certain classes of circuits, yet significant challenges remain for simulating the most experimentally relevant scenarios for quantum computation.

One such scenario involves quantum computations with bosonic encoding, particularly the Gottesman-Kitaev-Preskill (GKP) code~\cite{gottesman2001}. GKP-based bosonic quantum processors are implemented with microwave cavities coupled to supercounducting circuits~\cite{sivak2023real, brock2024quantumerrorcorrectionqudits}, trapped ions~\cite{de2022error} and photonic platforms~\cite{konno2024logical}, and are the object of intense study due to their potential for fault-tolerance~\cite{Tzitrin2021, Grimsmo2021}. 
Such architectures are typically extremely difficult to simulate since GKP codewords are described by infinitely many energy levels, making their brute-force simulation (based on expansions in the Fock basis) prohibitive for just a handful of GKP qudits. Also, approaches naively based on quasiprobability distributions are doomed to fail because the negativity of the Wigner function of encoded GKP states is very large, making the runtime of such simulators blow up. Neither are approaches based on decomposing GKP states into superpositions of Gaussian states~\cite{bourassa2021fast, hahn2024c,dias2024classical} suitable for the practically relevant case of high squeezing GKP states, due to the large non-Gaussianity of such states.

While progress has been made, current simulation techniques address only a limited subset of GKP-encoded computations. Specifically, Refs.~\cite{bermejo-vega2016, garcia-alvarez2020, calcluth2022, calcluth2023,hahn2024b} provide efficient algorithms for simulating circuits initialized in stabilizer states encoded in ideal (\textit{i.e.}, infinitely squeezed) GKP states, followed by Gaussian operations (including, but not limited to, encoded Clifford operations) and homodyne measurements. However, these methods cannot address the more practically relevant scenario of realistic (\textit{i.e.}, finitely squeezed) GKP states encoding general potentially non-stabilizer states.

Therefore, there is a critical need for new methods to simulate realistic GKP-encoded computations. Here, we propose a simulation algorithm designed to estimate the probability distribution of measurement outcomes for realistic GKP states undergoing encoded Clifford operations and encoded Pauli measurements, specifically for odd-dimensional GKP qudits. Qudit quantum computation is recently gaining both theoretical and experimental interest, in view, for instance, of enhanced algorithmic efficiency and simulation of high-dimensional systems (see Ref.~\cite{wang2020qudits} and references therein). We also provide a weak simulation algorithm for the case of infinitely squeezed GKP states.

We do so by generalizing the Zak-Gross Wigner (ZGW) function, introduced for a single-mode in Ref.~\cite{davis2024}, to the case of $n$ modes and by showing that it satisfies crucial properties that make it amenable to tackle the simulation of quantum circuits. Such a Wigner function positively represents ideal stabilizer GKP states, while it is negative for magic GKP states and any finitely squeezed GKP states. Inspired by Ref.~\cite{pashayan2015}, our simulation algorithm has runtime scaling with the amount of negativity of the ZGW function.

{\it Multimode ZGW function.} 
We generalize the ZGW function~\cite{davis2024} to the multimode case. We  use the notation $[\mathbf a,\mathbf b]=\mathbf a^T\Omega \mathbf b$, with $   \Omega =\left(\begin{smallmatrix}
        0& \mathds{1}\\
        -\mathds{1}&0
    \end{smallmatrix}\right)$
 the symplectic form.
For example, given two vectors of length $2n$,
\begin{align}
    \mathbf a=\begin{pmatrix}\mathbf{a_X}\\\mathbf{a_Z}\end{pmatrix} \text{ and } \mathbf b=\begin{pmatrix}\mathbf{b_X}\\\mathbf{b_Z}\end{pmatrix},
\end{align} we write $[\mathbf a,\mathbf b]=\mathbf{a_X}^T\mathbf{b_Z}-\mathbf{a_Z}^T
\mathbf{b_X}$.
With the standard quantum optics displacement operator, $\hat D(\mathbf a)=e^{i\mathbf a^T\Omega \hat{\mathbf r}}$, whereby $\hat{\mathbf r}=(\hat q_1,\dots, \hat q_n,\hat p_1,\dots,\hat p_n)^T$, we define the operators $\hat T_{\mathbf a}=\hat D(-\ell \mathbf a)$ with $\ell=\sqrt{2\pi/d}$, yielding 
\begin{align}
    \label{eq:cv-comm}
    \hat T_{\mathbf a}=e^{i \pi \mathbf{a_X}^T\mathbf{a_Z}/d}\hat T_{\mathbf{a_X}}\hat  T_{\mathbf{a_Z}}
\end{align}
with $\hat T_{\mathbf{a_X}} = e^{- i \ell  \mathbf a_X \cdot \hat{\mathbf p}}$ and $\hat T_{\mathbf{a_Z}} = e^{ i \ell  \mathbf a_Z \cdot \hat{\mathbf q}}$.
For integer elements of the vector $\mathbf a$, these operators can be interpreted as logical Pauli operators on the GKP code space with an additional phase factor. The ideal, non-normalizable GKP states defining the code subspace  in dimension $d$ are given by $\ket{j_{\text{GKP}}} = \frac{1}{\sqrt \ell}\sum_n \ket{(dn+j)\ell}$,
where the ket is in the position basis and $j\in \{0,\dots,d-1\}$.
 We are now equipped to introduce the following definition. 
\begin{defi}
We define the multimode, odd-dimensional ZGW function as
\begin{align}
    \label{eq:gkp-wig-def}
    W_{\hat \rho}(\boldsymbol\eta)= \Tr(\hat \rho \hat A_{\boldsymbol\eta}) ,
\end{align}
where the phase point operator $A_{\boldsymbol\eta}$ is defined as
\begin{align}
\label{eq:phasepoint}
\hat A_{\boldsymbol\eta}=&\frac{1}{(2\pi)^n}\sum_{\mathbf a\in\mathbb Z^{2n}} e^{i\ell[\mathbf a,{\boldsymbol\eta}]+i\pi \mathbf{a_X}^T\mathbf{a_Z}}\hat T_{\mathbf a},
\end{align}
with ${\boldsymbol\eta} \in [0,d\ell)^{\times 2n}$ and with $d$ a positive odd integer.
\end{defi}

For example, for a single mode we can rewrite $\mathbf a=(s,t)^T$ and ${\boldsymbol\eta}=(u,v)^T$ such that $
\hat T_{(s,t)^T}= e^{\pi ist/d}e^{-is\ell \hat p}e^{it \ell \hat q}$ and
\begin{equation}
    W_{\hat\rho}\begin{pmatrix}u\\v\end{pmatrix}
    =\frac{1}{2\pi} \sum_{s,t\in\mathbb Z}\Tr(\hat \rho e^{i (sv-ut)\ell}e^{i\pi st}\hat T_{(s,t)^T}),
\end{equation}
retrieving the definition of Ref.~\cite{davis2024}. The multimode ZGW function Eq.~(\ref{eq:gkp-wig-def}) satisfies the modified Stratonovich-Weyl axioms~\cite{davis2024,stratonovich1956,brif1999} as in the single-mode case, as demonstrated in Appendix~\ref{appendix:proof1}. Unlike in the case of the standard Wigner function defined for bosonic, {\it i.e.}, continuous-variable (CV), systems, there is not a one-to-one correspondence between a state and its ZGW function~\cite{davis2024}. 

{\it Efficient simulation of infinitely squeezed GKP states.} Here, we provide a weak simulation algorithm, \textit{i.e.}, a simulation algorithm that produces samples from the probability distribution corresponding to the measurement outcomes of the circuit shown in Fig.~\ref{fig:circuitclass-maintext}. The simulation method we introduce here only works for circuits initialized with states whose ZGW function is positive at any point, such as infinitely squeezed stabilizer GKP states. However, we also later provide a method to estimate the probability distribution function for negative ZGW function.

    \begin{figure}[h!]
     \centering
    \includegraphics[width=0.8\linewidth]{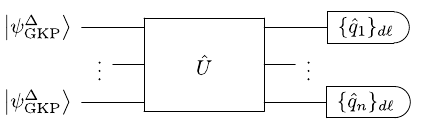}
     \caption{Circuit class that we consider.
     It has arbitrary input GKP-encoded qudits with inverse squeezing parameter $\Delta$. The evolution is given by unitaries with corresponding integer symplectic matrix and arbitrary displacement (which include all encoded Clifford operations). The measurements are homodyne measurements modulo $d\ell$.}
     \label{fig:circuitclass-maintext}
    \end{figure}

We generalize to the multimode setting the equivalence established in Ref.~\cite{davis2024} for a single mode between the ZGW function of a GKP state encoding an odd-dimensional qudit and the Gross Wigner function of the encoded qudit state. The Gross Wigner function~\cite{gross2006} is defined for odd-dimensional discrete-variable (DV) states as
\begin{align}
    \bar{W}_{\hat\rho}(\mathbf t)=\Tr(\hat{\bar A}_{\mathbf t}\hat\rho)
\end{align}
in terms of the qudit phase point operator
\begin{align}
    \hat{\bar{A}}_{\mathbf t}=\frac{1}{d^n} \sum_{\mathbf a\in\mathbb Z^{2n}_d}\hat{\bar{T}}_{\mathbf a} \omega^{-[\mathbf t,\mathbf a]}, 
\end{align}
where $\mathbf t\in \mathbb Z_d^{2n}$ and
\begin{align}
\label{eq:qudit-comm}
\hat{\bar{T}}_{\mathbf a}=\omega^{2^{-1}\mathbf{a_X}^T\mathbf{a_Z}} \hat{\bar{T}}_{\mathbf{a_X}}\hat{\bar{T}}_{\mathbf{a_Z}},
\end{align}
are the DV displacement operators \footnote{Note that we use the convention of Ref.~\cite{davis2024}, whereby the order of the displacement operators are reversed between Eq.~(\ref{eq:cv-comm}) and Eq.~(\ref{eq:qudit-comm}).}. The operators ${\hat{\bar{T}}_{\mathbf{a_X}}=\hat X_1^{(\mathbf a_X)_1}\otimes \dots \otimes \hat X^{(\mathbf{a_X})_n}}$ and ${\hat{\bar{T}}_{\mathbf{a_Z}}=\hat Z_1^{(\mathbf a_Z)_1}\otimes \dots \otimes \hat Z^{(\mathbf{a_Z})_n}}$ are defined in terms of the qudit Pauli $\hat X$ and $\hat Z$ displacements, respectively, using vectors  $\mathbf{a_X},\mathbf{a_Z}\in\mathbb Z^n_d$ denoting the exponent of the displacement operator at the relevant qudit. Note that $\omega=e^{2\pi i/d}$ is the $d$-th root of unity and $2^{-1}=(d+1)/2$ refers to the inverse in $\mathbb Z_{d}$~\cite{davis2024}.

We now introduce a Lemma and Corollary relating the ZGW function of CV states to the Gross Wigner function of qudit states. We use the notation $\hat\rho_L(\mathbf u)=\hat P_{\mathbf 0}\hat T_{-\mathbf u}\hat\rho\hat T_{\mathbf u}\hat P_{\mathbf 0}$ for the CV state $\hat\rho$ after GKP error correction \cite{baragiola2019}, whereby $\hat P_{\mathbf 0}=\sum_j\ket{j_{\text{GKP}}}\bra{j_{\text{GKP}}}$ is the projector onto the GKP computational subspace and $\mathbf u$ is in the domain $\mathbf u \in \mathbb T^{2}=[0,1)^{\times 2}$. Note that after applying the error correction projector, the state is a perfectly encoded subnormalized qudit state $\hat{\bar\rho}(\mathbf u)$, in the GKP basis. The resulting CV state can hence be expressed in terms of the qudit state as $\hat\rho_L(\mathbf u)=\sum_{j,k}\hat{\bar{\rho}}_{j,k}(\mathbf u)\ket{j_{\text{GKP}}}\bra{k_{\text{GKP}}}$.
\begin{lemma}
    \label{lemma:zak-gross-in-terms-of-gross}
    (Theorem 2 of Ref.~\cite{davis2024}.)
    The ZGW function of a single-mode CV state $\hat\rho$ is expressed in terms of the Gross Wigner function as
    \begin{align}
        W_{\hat \rho}({\boldsymbol\eta})=\bar{W}_{\hat {\bar \rho}(\mathbf u)}(\mathbf t),
    \end{align}
    where ${\boldsymbol\eta}=\ell (\mathbf u+ \mathbf t)$
    , i.e.  $\mathbf u=\tfrac{1}{\ell}{\boldsymbol\eta} \mod 1\in \mathbb T^{2}=[0,1)^{\times 2}$ and $\mathbf t=\tfrac{1}{\ell}{\boldsymbol\eta}-\mathbf u\in\mathbb Z_{d}^{2}$.
\end{lemma}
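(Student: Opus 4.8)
\emph{Proof strategy.} The plan is to evaluate the phase-point operator $\hat A_{\boldsymbol\eta}$ directly at $\boldsymbol\eta=\ell(\mathbf u+\mathbf t)$ and to show that the lattice sum defining it factorizes into a ``syndrome'' piece, which reconstructs the GKP error-correction projector displaced to the cell labelled by $\mathbf u$, and a ``logical'' piece, which reconstructs the qudit phase-point operator $\hat{\bar A}_{\mathbf t}$ on the code space. Concretely, I would first write each $\mathbf a\in\mathbb Z^{2}$ uniquely as $\mathbf a=\mathbf b+d\mathbf c$ with $\mathbf b\in\mathbb Z_d^{2}$ and $\mathbf c\in\mathbb Z^{2}$, split $\sum_{\mathbf a\in\mathbb Z^{2}}=\sum_{\mathbf b\in\mathbb Z_d^{2}}\sum_{\mathbf c\in\mathbb Z^{2}}$, and use the displacement-operator composition law to write $\hat T_{\mathbf b+d\mathbf c}$ as a phase times $\hat T_{\mathbf b}\hat T_{d\mathbf c}$, recognizing that the $\hat T_{d\mathbf c}$ are the GKP stabilizers up to a sign (here $d$ odd is used, since $\hat T_{d\mathbf c}\ket{j_{\mathrm{GKP}}}=\pm\ket{j_{\mathrm{GKP}}}$).

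Next I would collect all $\mathbf c$-dependent factors. Since $\ell^{2}=2\pi/d$ and $\mathbf t$ is an integer vector, the attached phase is $e^{i\ell[d\mathbf c,\boldsymbol\eta]}=e^{2\pi i[\mathbf c,\mathbf u]}e^{2\pi i[\mathbf c,\mathbf t]}=e^{2\pi i[\mathbf c,\mathbf u]}$, so by a Poisson-summation (Dirac-comb) argument $\tfrac{1}{(2\pi)^{n}}\sum_{\mathbf c}e^{2\pi i[\mathbf c,\mathbf u]}\hat T_{d\mathbf c}$ should equal, up to a constant, $\hat T_{\mathbf u}\hat P_{\mathbf 0}\hat T_{-\mathbf u}$, the GKP projector displaced into the syndrome cell; this is the step that consumes the $1/(2\pi)^n$ prefactor. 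The residual sum over $\mathbf b\in\mathbb Z_d^{2}$ then carries the phase $e^{i\ell[\mathbf b,\boldsymbol\eta]}=e^{i(2\pi/d)[\mathbf b,\mathbf u]}\,\omega^{-[\mathbf t,\mathbf b]}$ together with $\hat T_{\mathbf b}$, which acts on the code space exactly as the encoded Pauli $\hat{\bar T}_{\mathbf b}$ under $\ket{j_{\mathrm{GKP}}}\leftrightarrow\ket{j}$: the $\omega^{-[\mathbf t,\mathbf b]}$ matches the definition of $\hat{\bar A}_{\mathbf t}$, while the $e^{i(2\pi/d)[\mathbf b,\mathbf u]}$ is precisely what converts $\hat P_{\mathbf 0}\hat T_{-\mathbf u}\hat\rho\hat T_{\mathbf u}\hat P_{\mathbf 0}$ into $\hat{\bar\rho}(\mathbf u)$. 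Assembling the pieces gives $\hat A_{\ell(\mathbf u+\mathbf t)}=\hat T_{\mathbf u}\hat P_{\mathbf 0}\,\hat{\bar A}^{\,\mathrm{GKP}}_{\mathbf t}\,\hat P_{\mathbf 0}\hat T_{-\mathbf u}$, where $\hat{\bar A}^{\,\mathrm{GKP}}_{\mathbf t}$ is $\hat{\bar A}_{\mathbf t}$ with qudit Paulis replaced by encoded GKP Paulis, and by cyclicity of the trace $W_{\hat\rho}(\boldsymbol\eta)=\Tr\!\big(\hat T_{-\mathbf u}\hat\rho\hat T_{\mathbf u}\hat P_{\mathbf 0}\hat{\bar A}^{\,\mathrm{GKP}}_{\mathbf t}\hat P_{\mathbf 0}\big)=\Tr\!\big(\hat\rho_L(\mathbf u)\hat{\bar A}^{\,\mathrm{GKP}}_{\mathbf t}\big)=\Tr\!\big(\hat{\bar\rho}(\mathbf u)\hat{\bar A}_{\mathbf t}\big)=\bar W_{\hat{\bar\rho}(\mathbf u)}(\mathbf t)$.

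I expect the main obstacle to be the phase bookkeeping: one must verify that the factor $e^{i\pi\mathbf a_X^T\mathbf a_Z}$ in $\hat A_{\boldsymbol\eta}$, the intrinsic phase $e^{i\pi\mathbf a_X^T\mathbf a_Z/d}$ of $\hat T_{\mathbf a}$, the composition phase from splitting $\hat T_{\mathbf b+d\mathbf c}$, and the cross terms proportional to $d(\mathbf b_X^T\mathbf c_Z+\mathbf c_X^T\mathbf b_Z)$ all combine --- using that $d$ is odd, so $e^{i\pi d(\cdot)}=(-1)^{(\cdot)}$, and $2^{-1}=(d+1)/2$ in $\mathbb Z_d$ --- to reproduce exactly $\omega^{2^{-1}\mathbf b_X^T\mathbf b_Z}$ with no residual $\mathbf c$-dependence, the spurious $\mathbf c$-dependent signs cancelling against the sign in $\hat T_{d\mathbf c}\ket{j_{\mathrm{GKP}}}$. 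A secondary technical point is that $\hat A_{\boldsymbol\eta}$, $\hat P_{\mathbf 0}$ and $\ket{j_{\mathrm{GKP}}}$ are non-normalizable, so the Poisson-summation and trace steps should be read distributionally, or regularized by a finite-squeezing envelope with a limit taken at the end (equivalently, by pairing against a dense set of Schwartz states). Since this lemma is exactly Theorem~2 of Ref.~\cite{davis2024} restricted to one mode, one may alternatively simply invoke that result; the plan above is the self-contained route, and the multimode Corollary follows by running the same computation coordinate-wise.
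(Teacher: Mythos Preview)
Your strategy is sound and would yield a correct proof, but it organizes the computation differently from the paper. The paper first isolates, as a separate lemma, the identity $\hat A_{\mathbf 0}=\hat\Pi_d^{L}$ (the logical parity operator on the GKP code space), proved by computing $\hat A_{\mathbf 0}\ket{x}$ via precisely your lattice split $\mathbf a=\mathbf b+d\mathbf c$. That immediately gives $\hat P_{\mathbf 0}\hat A_{\mathbf 0}\hat P_{\mathbf 0}=\hat A_{\mathbf 0}$. The proof of the present lemma is then three lines: displacement covariance yields $\hat A_{\boldsymbol\eta}=\hat T_{\mathbf u+\mathbf t}\hat A_{\mathbf 0}\hat T_{-(\mathbf u+\mathbf t)}$, the conjugation kills the composition phase in $\hat T_{\mathbf u+\mathbf t}=\hat T_{\mathbf u}\hat T_{\mathbf t}$, one inserts $\hat P_{\mathbf 0}$'s around $\hat A_{\mathbf 0}$, and the integer displacement $\hat T_{\mathbf t}$ commutes through them to produce $\Tr(\hat A_{\mathbf t}\hat\rho_L(\mathbf u))$. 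Your route instead carries the full lattice sum at general $\boldsymbol\eta$ and identifies the $\mathbf c$-sum directly with the displaced projector via Poisson summation. The two are really the same computation packaged differently: the paper's modular version factors the hard work into a reusable parity lemma (also invoked in the traciality axiom) and keeps the present proof short and phase-free, while your one-shot version avoids an auxiliary lemma at the price of doing all the phase bookkeeping in one place. One small imprecision in your sketch: the factor $e^{i(2\pi/d)[\mathbf b,\mathbf u]}$ is exactly what gets absorbed when commuting $\hat T_{\mathbf b}$ past $\hat T_{\mathbf u}$ to pull the latter outside the sum; it does not ``convert'' $\hat\rho_L(\mathbf u)$ into $\hat{\bar\rho}(\mathbf u)$, since those are by definition the same (sub)state read in the CV versus qudit basis.
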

\begin{corollary}
    \label{corollary:state}
  The  ZGW function of a CV GKP state $\hat \rho_L$ perfectly encoding a DV state $\hat{\bar\rho}$ coincides with the Gross Wigner function of the encoded logical state $\hat{\bar{\rho}}$.
   Specifically,
    \begin{align}
        W_{\hat\rho_L}(\boldsymbol{\eta})\propto\sum_{\mathbf a\in\mathbb Z^2}\delta(\boldsymbol{\eta}-\ell \mathbf a)\bar{W}_{\hat{\bar{\rho}}}(\mathbf a).
    \end{align}
\end{corollary}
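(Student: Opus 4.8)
The plan is to deduce the Corollary from Lemma~\ref{lemma:zak-gross-in-terms-of-gross} together with one structural fact about GKP error correction. Given $\boldsymbol\eta$, set $\mathbf u=\tfrac1\ell\boldsymbol\eta\bmod 1\in\mathbb T^2$ and $\mathbf t=\tfrac1\ell\boldsymbol\eta-\mathbf u\in\mathbb Z_d^2$ as in the Lemma, so that $W_{\hat\rho_L}(\boldsymbol\eta)=\bar W_{\hat{\bar\rho}(\mathbf u)}(\mathbf t)$, with $\hat{\bar\rho}(\mathbf u)$ the logical state extracted from $\hat\rho_L(\mathbf u)=\hat P_{\mathbf 0}\hat T_{-\mathbf u}\hat\rho_L\hat T_{\mathbf u}\hat P_{\mathbf 0}$. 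Everything then follows from the claim that a \emph{perfectly} encoded $\hat\rho_L$ is invariant under trivial-syndrome error correction and annihilated otherwise: $\hat\rho_L(\mathbf 0)=\hat\rho_L$, whose logical content is $\hat{\bar\rho}$, while $\hat\rho_L(\mathbf u)=0$ for all $\mathbf u\in\mathbb T^2\setminus\{\mathbf 0\}$.

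The case $\mathbf u=\mathbf 0$ is immediate since $\hat\rho_L$ is in the range of $\hat P_{\mathbf 0}$. For $\mathbf u=(u_q,u_p)^T\neq\mathbf 0$ I would work with the ideal codewords directly. The operator $\hat T_{-\mathbf u}$ shifts the position comb $\{(dn+j)\ell\}$ of $\ket{j_{\text{GKP}}}$ by $-\ell u_q$ and applies the phase $e^{-i\ell u_p\hat q}$. If $u_q\notin\mathbb Z$ the shifted support is disjoint from the code lattice $\ell\mathbb Z$, so $\bra{k_{\text{GKP}}}\hat T_{-\mathbf u}\ket{j_{\text{GKP}}}=0$ by orthogonality of position eigenstates; if $u_q=0$ but $u_p\neq 0$, the same matrix element is, up to a phase and a divergent constant, proportional to $\delta_{jk}\sum_{n\in\mathbb Z}e^{-2\pi i u_p n}=\delta_{jk}\sum_{m\in\mathbb Z}\delta(u_p-m)$, which vanishes on $(0,1)$; this uses $\ell^2=2\pi/d$, which also makes the momentum lattice of the code have spacing $\ell$. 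In either case $\hat P_{\mathbf 0}\hat T_{-\mathbf u}\hat\rho_L=0$, proving the claim.

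Feeding this into Lemma~\ref{lemma:zak-gross-in-terms-of-gross}, $W_{\hat\rho_L}(\boldsymbol\eta)=\bar W_{\hat{\bar\rho}(\mathbf u)}(\mathbf t)$ vanishes unless $\boldsymbol\eta\in\ell\mathbb Z^2$ and equals $\bar W_{\hat{\bar\rho}}(\tfrac1\ell\boldsymbol\eta)$ there, with $\bar W_{\hat{\bar\rho}}$ extended $d$-periodically to $\mathbb Z^2$. Since $\hat\rho_L$ is non-normalizable, this lattice-supported object is a Dirac comb, which is precisely $W_{\hat\rho_L}(\boldsymbol\eta)\propto\sum_{\mathbf a\in\mathbb Z^2}\delta(\boldsymbol\eta-\ell\mathbf a)\bar W_{\hat{\bar\rho}}(\mathbf a)$, the proportionality constant absorbing the divergent codeword overlap $\langle j_{\text{GKP}}|j_{\text{GKP}}\rangle$. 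To pin down both the $\delta$-comb and the constant rigorously I would instead compute $W_{\hat\rho_L}(\boldsymbol\eta)=\Tr(\hat\rho_L\hat A_{\boldsymbol\eta})$ directly: insert $\hat\rho_L=\sum_{j,k}\bar\rho_{j,k}\ket{j_{\text{GKP}}}\bra{k_{\text{GKP}}}$ and Eq.~(\ref{eq:phasepoint}), use $\hat T_{\mathbf{a_X}}\ket{j_{\text{GKP}}}=\ket{(j+a_X)_{\text{GKP}}}$ and $\hat T_{\mathbf{a_Z}}\ket{j_{\text{GKP}}}=\omega^{a_Z j}\ket{j_{\text{GKP}}}$, check that the phase $e^{i\pi\mathbf{a_X}^T\mathbf{a_Z}}$ in Eq.~(\ref{eq:phasepoint}) exactly cancels the convention mismatch between Eqs.~(\ref{eq:cv-comm}) and (\ref{eq:qudit-comm}) (recall $2^{-1}=(d+1)/2$ with $d$ odd), split $\mathbf a=d\mathbf m+\mathbf b$ with $\mathbf b\in\mathbb Z_d^2$, resum the $\mathbf m$-sum by Poisson summation into $\sum_{\mathbf k\in\mathbb Z^2}\delta(\boldsymbol\eta-\ell\mathbf k)$, and identify the leftover $\mathbf b$-sum with $\bar W_{\hat{\bar\rho}}(\mathbf k)$ by comparison with the definition of $\hat{\bar A}_{\mathbf t}$.

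The main obstacle is the distributional bookkeeping: Lemma~\ref{lemma:zak-gross-in-terms-of-gross} is phrased for CV states whose Zak-Gross Wigner function is an honest function, whereas $\hat\rho_L$, the intermediate $\hat\rho_L(\mathbf u)$, and $W_{\hat\rho_L}$ are all non-normalizable, so the informal ``vanishes off the lattice, spike on it'' has to be turned into the stated $\delta$-comb with a controlled (here unspecified, hence $\propto$) normalization. This is cleanly handled by the direct computation above, which never invokes the Lemma and produces the comb and constant in one stroke; alternatively one can run the Lemma-based argument on finitely squeezed approximants and take the ideal limit. Apart from this the proof is routine: the only physics input is that ideal GKP error correction acts as the identity on code states and kills their nontrivial sub-lattice translates.
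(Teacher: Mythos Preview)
Your proposal is correct and follows essentially the same route as the paper: invoke Lemma~\ref{lemma:zak-gross-in-terms-of-gross}, then show that for a perfectly encoded $\hat\rho_L$ the error-corrected state $\hat\rho_L(\mathbf u)$ is supported only at $\mathbf u=\mathbf 0$. The paper does the latter by computing $\hat P_{\mathbf 0}\hat T_{\mathbf u}\hat P_{\mathbf 0}=\tfrac{1}{\ell}\delta(\mathbf u)\hat P_{\mathbf 0}$ directly in the position basis, which yields $\hat\rho_L(\mathbf u)=\tfrac{d}{2\pi}\delta^2(\mathbf u)\hat\rho_L(\mathbf 0)$ and hence the comb with explicit constant $\tfrac{d}{2\pi}$ in one stroke---so the distributional bookkeeping you flag as the main obstacle is handled there rather than by a regularization or by your alternative Poisson-summation computation (which the paper does not pursue).
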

In Appendix~\ref{appendix:proof-zak-gross}, we provide an alternative proof to Lemma~\ref{lemma:zak-gross-in-terms-of-gross} to that given in Ref.~\cite{davis2024}. In Appendix~\ref{appendix:ssd}, we highlight a connection to the stabilizer subsystem decomposition~\cite{shaw2024,calcluth2022}. Finally, in Appendix~\ref{appendix:logical-state-gross}, we prove Corollary \ref{corollary:state} via Lemma \ref{lemma:zak-gross-in-terms-of-gross}. 
This Corollary informs us that positivity of the Gross Wigner function of a qudit state implies positivity of the ZGW function for the perfectly encoded CV state.
Hence, all ideally encoded stabilizer GKP states have positive ZGW function.

Next, we show that the evolution of CV states can be described using the ZGW function when restricting to symplectic operations described by integer matrices \footnote{We note that the restriction to integer symplectic matrices implies that single-mode squeezing is not simulatable with our method. Arbitrary single-mode squeezing would unlock the ability to simulate all Gaussian operations represented by rational symplectic matrices. The possibility that simulating squeezing may, in general, require extra computational cost is left open. 
 We leave further analysis to future work.}. These include all encoded Clifford operations as shown in Appendix~\ref{appendix:decomposition}. 
\begin{theorem} \label{ZGWcovariance}
    Consider the unitary evolution of a state $\hat \rho$ under the action of a Gaussian unitary operation $\hat U_S$ described by an integer symplectic matrix $S$. The ZGW function of the evolved state is given by
    \begin{align}
        W_{\hat U_S\hat \rho \hat U_S^\dagger}({\boldsymbol\eta})=&W_{\hat \rho }(S{\boldsymbol\eta}-\mathbf t)\;,
    \end{align}
    where $\mathbf t$ can be computed efficiently from $S$.
\end{theorem}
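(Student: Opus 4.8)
The plan is to prove the operator identity $\hat U_S^\dagger\hat A_{\boldsymbol\eta}\hat U_S=\hat A_{S\boldsymbol\eta-\mathbf t}$, from which the theorem follows immediately by cyclicity of the trace, since $W_{\hat U_S\hat\rho\hat U_S^\dagger}(\boldsymbol\eta)=\Tr(\hat\rho\,\hat U_S^\dagger\hat A_{\boldsymbol\eta}\hat U_S)$. As in Ref.~\cite{davis2024} and Appendix~\ref{appendix:proof1}, all infinite sums below are understood in the distributional sense.

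First I would use the covariance of the displacement operators under Gaussian unitaries in the form $\hat U_S^\dagger\hat T_{\mathbf a}\hat U_S=\hat T_{S\mathbf a}$; conjugation produces no extra phase because $\hat T_{\mathbf a}=\hat D(-\ell\mathbf a)$ is a genuine displacement. Since $S$ is an integer symplectic matrix, so is $S^{-1}=-\Omega S^T\Omega$, hence $\mathbf a\mapsto S\mathbf a$ is a bijection of $\mathbb Z^{2n}$ onto itself. Substituting this into Eq.~(\ref{eq:phasepoint}) and relabelling $\mathbf b=S\mathbf a$, one arrives at
\begin{align*}
\hat U_S^\dagger\hat A_{\boldsymbol\eta}\hat U_S ={}& \frac{1}{(2\pi)^n}\sum_{\mathbf b\in\mathbb Z^{2n}} e^{i\ell[S^{-1}\mathbf b,\boldsymbol\eta]} \\
&\times e^{i\pi(S^{-1}\mathbf b)_X^T(S^{-1}\mathbf b)_Z}\,\hat T_{\mathbf b} .
\end{align*}
The linear part of the exponent is tamed by symplecticity, $(S^{-1})^T\Omega=\Omega S$, which gives $[S^{-1}\mathbf b,\boldsymbol\eta]=[\mathbf b,S\boldsymbol\eta]$. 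What remains is the quadratic ``metaplectic'' phase $e^{i\pi(S^{-1}\mathbf b)_X^T(S^{-1}\mathbf b)_Z}$, which a priori is not a shift of $\boldsymbol\eta$.

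The crux is to linearise this phase. Writing $S^{-1}=\left(\begin{smallmatrix}A&B\\C&D\end{smallmatrix}\right)$ in $n\times n$ blocks and using the symplectic relations ($A^TC$ and $B^TD$ symmetric, $A^TD-C^TB=\mathds 1$), a short expansion gives
\begin{align*}
(S^{-1}\mathbf b)_X^T(S^{-1}\mathbf b)_Z ={}& \mathbf{b_X}^T\mathbf{b_Z} + \mathbf{b_X}^T(A^TC)\mathbf{b_X} \\
&+ \mathbf{b_Z}^T(B^TD)\mathbf{b_Z} + 2\,\mathbf{b_X}^T(C^TB)\mathbf{b_Z} .
\end{align*}
The last term is even and drops out of $e^{i\pi(\cdot)}$; and for any integer symmetric matrix $N$ and integer vector $\mathbf x$ one has $\mathbf x^T N\mathbf x\equiv\mathrm{diag}(N)^T\mathbf x\pmod 2$ (since $x_i^2\equiv x_i$ and the off-diagonal part is doubled). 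Hence the metaplectic phase collapses to $e^{i\pi\mathbf{b_X}^T\mathbf{b_Z}}$ --- precisely the phase already present in the definition of $\hat A$ --- times the \emph{linear} character $e^{i\pi(\boldsymbol\gamma_X^T\mathbf{b_X}+\boldsymbol\gamma_Z^T\mathbf{b_Z})}$, with $\boldsymbol\gamma_X=\mathrm{diag}(A^TC)$ and $\boldsymbol\gamma_Z=\mathrm{diag}(B^TD)$. Using $\pi=d\ell^2/2$, this character is exactly $e^{-i\ell[\mathbf b,\mathbf t]}$ for $\mathbf t=\tfrac{d\ell}{2}(\boldsymbol\gamma_Z,-\boldsymbol\gamma_X)^T$, so it combines with $e^{i\ell[\mathbf b,S\boldsymbol\eta]}$ into $e^{i\ell[\mathbf b,S\boldsymbol\eta-\mathbf t]}$, giving $\hat U_S^\dagger\hat A_{\boldsymbol\eta}\hat U_S=\hat A_{S\boldsymbol\eta-\mathbf t}$. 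Here the argument is reduced modulo $d\ell$ componentwise, which is legitimate because $\hat A_{\boldsymbol\eta}$ is invariant under $\boldsymbol\eta\mapsto\boldsymbol\eta+d\ell\,\mathbf m$ for $\mathbf m\in\mathbb Z^{2n}$ (each summand acquires $e^{2\pi i[\mathbf a,\mathbf m]}=1$). The vector $\mathbf t$ is obtained by multiplying blocks of $S^{-1}$ and reading off diagonals, a polynomial-time computation, which establishes the efficiency claim; and if $\hat U_S$ additionally carries a phase-space displacement $\mathbf d$, the analogous covariance $\hat D(\mathbf d)^\dagger\hat A_{\boldsymbol\eta}\hat D(\mathbf d)=\hat A_{\boldsymbol\eta+\mathbf d}$ shows it contributes a further translation of the argument, which can be absorbed into $\mathbf t$.

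I expect the linearisation of the metaplectic phase to be the one genuinely non-routine step, and the place where both the symplectic constraints on $S^{-1}$ and the \emph{integrality} of $S$ are indispensable: integrality makes $(S^{-1}\mathbf b)_X^T(S^{-1}\mathbf b)_Z$ an integer and forces the even and off-diagonal pieces to disappear from the phase, so that a shift of $\boldsymbol\eta$ alone can account for the evolution. Everything else --- the covariance relation, the symplectic identity for $[\cdot,\cdot]$, the relabelling of the lattice sum, and the $d\ell$-periodicity of $\hat A_{\boldsymbol\eta}$ --- is bookkeeping.
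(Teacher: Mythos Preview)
Your proof is correct and follows essentially the same route as the paper's (Appendix~\ref{appendix:operations}): both hinge on the mod-$2$ linearisation of the quadratic form $(S^{\pm 1}\mathbf b)_X^T(S^{\pm 1}\mathbf b)_Z$ via the symplectic block identities and integrality, turning the metaplectic phase into a linear shift of $\boldsymbol\eta$. The only cosmetic difference is that you substitute $\mathbf b=S\mathbf a$ before linearising (and hence read the diagonals off the blocks of $S^{-1}$), whereas the paper linearises $(S\mathbf a)_X^T(S\mathbf a)_Z$ first using the blocks of $S$ and substitutes afterwards, arriving at the equivalent expression $\mathbf t=\tfrac{\pi}{\ell}S\Omega^{-1}\bar{\mathbf t}$.
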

The proof of this Theorem and an explanation of how to find $\mathbf t$ efficiently is given in Appendix~\ref{appendix:operations}. Note that this also implies that if the ZGW function is initially positive, the positivity is preserved by integer symplectic operations, and in particular by all encoded Clifford operations. This is because the symplectic operation can be interpreted as a change of coordinates, hence simply transforming the positive ZGW distribution to another positive ZGW distribution.

We can also trivially account for all displacements due to the covariance property of the modified Stratonovich-Weyl axioms, \textit{i.e.}, ${W_{\hat T_{\mathbf b} \hat C \hat T_{\mathbf b}^\dagger}({\boldsymbol\eta})=W_{\hat C}({\boldsymbol\eta} + \ell {\mathbf b})}$ for all ${\mathbf b\in \mathbb R^{2n}}$, as shown in  Appendix~\ref{appendix:proof1}. Therefore, we can consider operations of the form $\hat T_{\mathbf c}$ for all $\mathbf c\in\mathbb R^{2n}$. We also note that $\hat U_{S,\mathbf c} \equiv \hat T_{\mathbf c} \hat U_S=\hat U_S \hat T_{S\mathbf c} $ \cite{serafini2017}, therefore we can consider any integer symplectic operations interlaced with arbitrary real displacements. 

Finally, we consider modular measurements in the position basis 
$\hat M_Z(\mathbf s)=\tfrac{1}{d\ell}\sum_{\mathbf n}e^{-i \ell\mathbf s\cdot \mathbf n}\hat T_{(\mathbf 0,\mathbf n)}$, which is equivalent to measuring $\hat{\mathbf q} \mod d\ell$. Given perfectly encoded GKP states, the measurement values of $\mathbf s$ can be used to infer the logical computational basis information, yielding encoded Pauli measurements. In other words, for the GKP encoding, this is equivalent to measuring the logical operator $\hat Z_L^{\otimes n}$. 
Such modular measurements can be implemented either by coupling the mode to an auxiliary ideal GKP state~\cite{gottesman2001} or by performing phase estimation with an auxiliary two-level system \cite{nielsen2010, campagne-ibarcq2020, PhysRevX.8.021001}. Alternatively, if the circuit is non-adaptive, the value of modular measurements can be inferred directly from standard homodyne measurements by taking the value of the output modulo $d\ell$. See Appendix~\ref{appendix:measurements} for more details on modular measurements. 

To provide an algorithm that samples from the output probability distribution, we introduce the following Theorem.
\begin{theorem}
      The measurement of the logical operator $\hat Z_L^{\otimes n}$ represented by the projector $\hat M_{Z}(\mathbf s)$ has a probability distribution function for the state $\hat \rho$ of the form
    \begin{align}
    \label{eq:outcomes-from-Wigner}
    \Tr(\hat \rho \hat M_Z(\mathbf s))=&\int \dd {\boldsymbol\eta}_{\mathbf X} W_{\hat\rho}\begin{pmatrix}{\boldsymbol\eta}_{\mathbf X}\\\mathbf s\end{pmatrix}.
    \end{align}
\end{theorem}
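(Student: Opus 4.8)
The plan is to prove Eq.~(\ref{eq:outcomes-from-Wigner}) directly, by inserting the definition of the phase-point operator [Eq.~(\ref{eq:phasepoint})] into the right-hand side and carrying out the $\boldsymbol\eta_{\mathbf X}$-integral term by term. Since only the exponential $e^{i\ell[\mathbf a,\boldsymbol\eta]}$ carries the $\boldsymbol\eta_{\mathbf X}$-dependence, the integral can (formally) be pulled through the trace and through the lattice sum over $\mathbf a\in\mathbb Z^{2n}$, so the task reduces to evaluating $\int\dd\boldsymbol\eta_{\mathbf X}\,e^{i\ell[\mathbf a,(\boldsymbol\eta_{\mathbf X},\mathbf s)]}$ over the fundamental cell $\boldsymbol\eta_{\mathbf X}\in[0,d\ell)^{\times n}$ and reassembling the pieces.

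First I would write $[\mathbf a,(\boldsymbol\eta_{\mathbf X},\mathbf s)]=\mathbf a_{\mathbf X}^T\mathbf s-\mathbf a_{\mathbf Z}^T\boldsymbol\eta_{\mathbf X}$, so the integral factorizes into $n$ one-dimensional integrals $\int_0^{d\ell}e^{-i\ell(\mathbf a_{\mathbf Z})_k\eta}\dd\eta$. Each such factor equals $d\ell$ when $(\mathbf a_{\mathbf Z})_k=0$ and vanishes otherwise, because $\ell\cdot d\ell=2\pi$ (equivalently $\ell^2=2\pi/d$) makes $e^{-i\ell(\mathbf a_{\mathbf Z})_k\,d\ell}=e^{-2\pi i(\mathbf a_{\mathbf Z})_k}=1$ for integer $(\mathbf a_{\mathbf Z})_k$. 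Hence the integral is $(d\ell)^n\,\delta_{\mathbf a_{\mathbf Z},\mathbf 0}\,e^{i\ell\mathbf a_{\mathbf X}^T\mathbf s}$, which collapses the double sum over $\mathbf a=(\mathbf a_{\mathbf X},\mathbf a_{\mathbf Z})$ to a single sum over one half of the displacement lattice, trivializes the quadratic phase $e^{i\pi\mathbf a_{\mathbf X}^T\mathbf a_{\mathbf Z}}$, and reduces $\hat T_{\mathbf a}$ to a single family of Weyl displacements weighted by $e^{\pm i\ell(\cdot)\cdot\mathbf s}$ --- which is exactly the defining expansion of $\Tr(\hat\rho\hat M_Z(\mathbf s))$, once the leftover constant $(d\ell)^n/(2\pi)^n$ is simplified with $\ell^2=2\pi/d$ and matched against the $1/(d\ell)^n$ normalization of $\hat M_Z(\mathbf s)=\tfrac1{(d\ell)^n}\sum_{\mathbf n}e^{-i\ell\mathbf s\cdot\mathbf n}\hat T_{(\mathbf 0,\mathbf n)}$. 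An equivalent route would be to invoke the trace (overlap) Stratonovich--Weyl axiom $\Tr(\hat\rho\hat B)\propto\int\dd\boldsymbol\eta\,W_{\hat\rho}(\boldsymbol\eta)W_{\hat B}(\boldsymbol\eta)$ from Appendix~\ref{appendix:proof1} and to compute the Zak--Gross symbol of the POVM element $\hat M_Z(\mathbf s)$, showing it is a distribution concentrated on the slice $\{\boldsymbol\eta_{\mathbf Z}=\mathbf s\}$ (a Dirac comb in the complementary coordinates), so that the overlap integral collapses to precisely the stated marginal.

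The main obstacle I anticipate is analytic rather than algebraic: the phase-point operator is an operator-valued lattice series that does not converge in operator norm, and $\hat M_Z(\mathbf s)$ is itself an unbounded, Dirac-comb-like object, so the interchange of the lattice sum, the trace, and the $\boldsymbol\eta_{\mathbf X}$-integral used above has to be justified rather than assumed. The plan is to first establish the identity on a norm-dense set of sufficiently regular states --- e.g.\ finite-energy $\hat\rho$ with Schwartz-class wavefunctions, for which $\mathbf a\mapsto\Tr(\hat\rho\hat T_{\mathbf a})$ decays rapidly on $\mathbb Z^{2n}$ so that Fubini and dominated convergence apply --- and then extend by continuity, or else to appeal directly to the rigorous multimode Stratonovich--Weyl framework already set up in Appendix~\ref{appendix:proof1}. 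Getting the index/coordinate bookkeeping and the dimension-dependent normalization (again via $\ell^2=2\pi/d$) exactly right is the other place where care is needed; everything beyond that is the elementary lattice-Fourier computation sketched above.
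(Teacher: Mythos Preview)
Your alternative route---compute the Zak--Gross symbol of $\hat M_Z(\mathbf s)$ and then invoke the overlap/traciality axiom---is exactly what the paper does. One point you should make explicit: the \emph{modified} traciality axiom of Appendix~\ref{appendix:proof1} yields only $\int\dd\boldsymbol\eta\,W_{\hat\rho}(\boldsymbol\eta)W_{\hat D}(\boldsymbol\eta)=\Tr\!\big(\mathcal E(\hat\rho)\,\hat D\big)$, not $\Tr(\hat\rho\,\hat D)$, because the Zak--Gross map is not injective. The paper therefore first proves $\mathcal E(\hat M_Z(\mathbf s))=\hat M_Z(\mathbf s)$ (using that each $\hat T_{(\mathbf 0,\mathbf n)}$ commutes with the GKP projector $\hat P_{\mathbf 0}$), and it is this twirling invariance that lets one pass from $\Tr(\mathcal E(\hat\rho)\hat M_Z)$ to $\Tr(\hat\rho\,\hat M_Z)$.

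Your primary direct-integration route, however, does not close. The $\boldsymbol\eta_{\mathbf X}$-integral indeed collapses the lattice sum to $\mathbf a_{\mathbf Z}=\mathbf 0$, but then the surviving displacements are $\hat T_{(\mathbf a_{\mathbf X},\mathbf 0)}=e^{-i\ell\,\mathbf a_{\mathbf X}\cdot\hat{\mathbf p}}$ (position shifts), \emph{not} the $\hat T_{(\mathbf 0,\mathbf n)}=e^{i\ell\,\mathbf n\cdot\hat{\mathbf q}}$ out of which $\hat M_Z(\mathbf s)$ is built. What your integral actually produces is $\Tr\!\big(\hat\rho\,\hat B(\mathbf s)\big)$ with $\hat B(\mathbf s)=\ell^{-n}\sum_{\mathbf m}e^{i\ell\,\mathbf m\cdot\mathbf s}\,e^{-i\ell\,\mathbf m\cdot\hat{\mathbf p}}$, a modular-\emph{momentum} comb, not the modular-position POVM $\hat M_Z(\mathbf s)$. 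You can see the mismatch already from the paper's own $W_{|0_{\mathrm{GKP}}\rangle}\propto\delta(\eta_X)\sum_m\delta(\eta_Z-\ell m)$: marginalising over $\eta_X$ at fixed $\eta_Z=s$ gives $d$ equal peaks in $s$, whereas measuring $\hat q\bmod d\ell$ on $|0_{\mathrm{GKP}}\rangle$ must give $s=0$ with certainty. So the step ``which is exactly the defining expansion of $\Tr(\hat\rho\,\hat M_Z(\mathbf s))$'' fails, and the direct route does not bypass the traciality/twirling argument.
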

This Theorem is proven in Appendix~\ref{appendix:measurements}, based on the fact that $W_{\hat M_Z(\mathbf s)}({\boldsymbol\eta})\propto \delta( {\boldsymbol\eta}_{\mathbf Z}-\mathbf s)$.

Therefore, for ideal stabilizer GKP states undergoing encoded Clifford operations and arbitrary displacements $\hat U_{S,\mathbf c}$, followed by modular measurements, we propose an efficient simulation algorithm (akin to those given in Refs.~\cite{mari2012,veitch2013}) as follows. Consider the input state $\hat \rho_0 =\ket{0_{\text{GKP}}}\bra{0_{\text{GKP}}}^{\otimes n}$. Its ZGW function $W_{\hat \rho_0}({\boldsymbol\eta})$ can be derived from the ZGW function of a single GKP state by noting that for the tensor product of two states $\hat\rho$ and $\hat \sigma$, we have
     \begin{align}
     \label{eq:tensor-product}
     {W_{\hat \rho\otimes \hat \sigma}(\boldsymbol\eta)=W_{\hat \rho}(\boldsymbol\eta^{(1)})W_{\hat \sigma}(\boldsymbol\eta^{(2)})},
     \end{align}
    as explicitly shown in Appendix~\ref{appendix:tensor-proof}.
  Sample a vector ${\boldsymbol\eta}$ from the effective probability distribution $W_{\hat \rho_0}({\boldsymbol\eta})$. Next, transform the vector under the symplectic matrix $S$ and the corresponding displacement vector $\mathbf t$ described in Appendix~\ref{appendix:operations} as $S{\boldsymbol\eta}-\mathbf t$. Next, transform the vector by $\mathbf c$ according to the linear displacement operator such that the vector becomes $S{\boldsymbol\eta}-\mathbf t+\mathbf c$. Finally, we see from Eq.~(\ref{eq:outcomes-from-Wigner}) that the measurement result is given by the second half of this vector. 

This result is strictly included in the results provided in Refs.~\cite{garcia-alvarez2020, calcluth2022, calcluth2023}. Indeed, we are dealing here with encoded Clifford operations, a subset of the Gaussian operations associated with rational symplectic matrices shown to be simulatable in Ref.~\cite{calcluth2023}. Also, we are considering modular measurements, which can be inferred from the homodyne measurement in Ref.~\cite{calcluth2023}. Finally, we are considering here a weaker notion of simulation, \textit{i.e.}, weak simulation, while strong simulation was addressed in Ref.~\cite{calcluth2023}. However, the key novelty of the current method is that we are now in a position to tackle the simulation of realistic GKP states, a crucial advancement given their practical significance in experimental quantum computing and error correction.

{\it Simulation with negative ZGW functions.}
 We provide a simulation algorithm that reproduces the statistics of the circuit shown in Fig.~\ref{fig:circuitclass-maintext} 
 for arbitrary squeezing level $\Delta$ of the input GKP states.
 To do so, we consider a different type of simulation, in particular, estimating the probability of each outcome. 
 To provide meaningful estimates of probabilities from the probability density function of the continuous outcomes of the modular measurement $\hat M_Z(\mathbf s)$, we discretize the outcomes  into a finite number of bins.
Since the modular measurement outcome $s_j$ lies within the interval $(0,\ell d]$, we partition this range into $K$ bins. Let $s_{j,k}=k\ell d/K$ represent the endpoints of these bins and define the corresponding discretized projection operator as
\begin{align}
\label{eq:binning-meas}
    \hat M_{\mathbf z}=\int_{s_{1,z_1}}^{s_{1,z_1}+\ell d/K}\dots \int_{s_{m,z_m}}^{s_{m,z_m}+\ell d/K} \dd \mathbf x M_Z(\mathbf x),
\end{align}
where $\mathbf z\in \mathbb Z_K^m$ is an $m$-vector corresponding to the choice of bin in each mode for each of the $m\leq n$ measured modes.
Next, consider measurements of $\hat M_{\mathbf z}$ for each $\mathbf z$. The probabilities of measuring each projector $\hat M_{\mathbf z}$ are (see Appendix~\ref{appendix:measurements})
\begin{align}
    p_{\mathbf z}=\Tr(\hat \rho \hat M_{\mathbf z})=\int \dd {\boldsymbol\eta}  W_{\hat \rho}({\boldsymbol\eta}) W_{\hat M_{\mathbf z}}({\boldsymbol\eta}).
\end{align} 

The probability distribution can be estimated using the procedure of Ref.~\cite{pashayan2015}.
We first sample from the evolved ZGW quasiprobability distribution according to 
\begin{align}
    \label{eq:pdf-evolved}
    \Pr(\boldsymbol{\eta})=\frac{1}{\mathcal M_{\hat U_{S,\mathbf c}\hat \rho_0 \hat U_{S,\mathbf c}^\dagger}}\abs{W_{\hat U_{S,\mathbf c} \hat \rho_0 \hat U_{S,\mathbf c}^\dagger}(\boldsymbol{\eta})},
\end{align}
where the negativity $\mathcal M_{\hat \rho}$ of an arbitrary state $\hat\rho$, or {\it  negative volume} of the ZGW function \cite{davis2024}, is defined as 
\begin{align}
    \mathcal M_{\hat \rho}=\int \dd \boldsymbol{\eta} \abs{W_{\hat\rho}(\boldsymbol{\eta})}.
\end{align}
Note that Eq.~(\ref{eq:tensor-product}) implies that the negativity is multiplicative.
 Also, note that encoded Clifford operations do not change the total amount of negativity. Furthermore, we can use the property of covariance under Clifford operations to identify $\abs{W_{\hat U_{S,\mathbf c} \hat \rho_0 \hat U_{S,\mathbf c}^\dagger}(\boldsymbol{\eta})}=\abs{W_{ \hat \rho_0}(S\boldsymbol{\eta}-\mathbf t+\mathbf c)}$. Therefore, we can sample from the distribution in Eq.~(\ref{eq:pdf-evolved}) by first choosing $\boldsymbol{\eta}$ from 
\begin{align}
    {\Pr}_0(\boldsymbol{\eta})=\frac{1}{\mathcal M_{\hat \rho_0}}\abs{W_{ \hat \rho_0}(\boldsymbol{\eta})}
\end{align}
and then evolving the sample vector under the action ${\boldsymbol\eta \mapsto S\boldsymbol{\eta}-\mathbf t+\mathbf c}$, as done for the case of infinitely squeezed states. This is equivalent to sampling from the distribution in Eq.~(\ref{eq:pdf-evolved}). Based on a single sample, we estimate the probability density function as
\begin{align}
    \tilde{p}_{\mathbf z}(\boldsymbol\eta)=\mathcal M_{\hat \rho_0} \text{sign}(W_{\hat \rho_0}(S\boldsymbol{\eta}-\mathbf t+\mathbf c))W_{\hat M_{\mathbf z}}(\boldsymbol{\eta}).
\end{align}
The expectation value of this estimator is given by
\begin{align}
    \langle \tilde p_{\mathbf z} \rangle =& \int \dd \boldsymbol{\eta} \tilde{p}_{\mathbf z}(\boldsymbol{\eta}) \Pr(\boldsymbol{\eta}) \nonumber \\
    =& \int \dd{\boldsymbol{\eta}}\text{sign}(W_{\hat \rho}(\boldsymbol{\eta}))\abs{W_{ \hat \rho}(\boldsymbol{\eta})}W_{\hat M_{\mathbf z}}(\boldsymbol{\eta}) \nonumber \\
    =& \int\dd{\boldsymbol{\eta}} W_{ \hat \rho_0}(S\boldsymbol{\eta}-\mathbf t+\mathbf c)W_{\hat M_{\mathbf z}}(\boldsymbol{\eta}).
\end{align}
Our probability estimator takes values between $\pm\mathcal M_{\hat\rho_0}$, \textit{i.e.},
$\tilde p_{\mathbf z}\in[-\mathcal M_{\hat\rho_0},\mathcal M_{\hat\rho_0}]$, and therefore the Hoeffding's inequality~\cite{hoeffding1963} implies that we require
\begin{align}
    N=\frac{2}{\epsilon^2} \mathcal M_{\hat\rho_0}^2 \log (2/\delta)
\end{align}
samples to obtain an approximation of the probability $p_{\mathbf z}$ with additive error $\epsilon$ and success probability $1-\delta$. 
Therefore, the computational complexity of the simulation algorithm scales quadratically in the negative volume of the multimode ZGW function.
Specifically, in the case of identical inputs with negative ZGW function $\hat\rho_0 = \hat \rho_{\text{in}}^{\otimes n}$, the number of samples required grows exponentially with the size of the circuit as $\mathcal M_{\hat\rho_0} = \mathcal M_{\hat\rho_{\text{in}}}^{n}$,  by virtue of Eq.~(\ref{eq:tensor-product}).
 
{\it Simulation of realistic GKP states.}
Realistic GKP states $|\psi_{\text{GKP},j}^{\Delta}\rangle$ with Gaussian peaks and a Gaussian envelope over the position basis are described by the wavefunction
\begin{align}
    \psi_{\text{GKP},j}^{\Delta}(x)\propto \sum_{k\in \mathbb Z}e^{-\frac{1}{2}\Delta^2(j\ell+d\ell k)^2}e^{-\frac{1}{2\Delta^2}(x-j\ell -d\ell k)^2}
\end{align}
where $\Delta$ determines the squeezing of the peaks; the lower  $\Delta$, the larger the squeezing.
Such states are described by ZGW functions with negativities \cite{davis2024}; hence, they must be simulated non-efficiently with the algorithm that we have just described. Here, we explicitly evaluate the ZGW function of realistic GKP  states and demonstrate that its negativity, and hence also the simulation time, scales inversely with the squeezing of the initial GKP basis states  $|\psi_{\text{GKP},j}^{\Delta}\rangle$. In other words, the more squeezed the input GKP basis states are, the faster the classical simulation --- in contrast with previously addressed simulators for GKP circuits~\cite{bourassa2021fast, hahn2024c}.

In Appendix~\ref{appendix:gkp-wigner-derivation}, we demonstrate that the ZGW function of a single realistic 0-logical GKP state is given by
\begin{align}
    W_{\text{GKP}}\begin{pmatrix}u\\v\end{pmatrix}\propto \vartheta(\Gamma;\mathbf z),
\end{align}
where $\mathbf z=(v/(d\ell),-u/(d\ell),0,0)^T$ and 
\begin{align}
    \Gamma=\frac 1 2 \begin{pmatrix}\frac{i}{d\Delta^2}& 1 & -\frac{i}{\Delta^2}&\frac{i}{\Delta^2}\\
    1& \frac{i\Delta^2}{d} &1 & 1\\
    -\frac{i}{\Delta^2}&1&\frac{i(2d\pi +4d\pi \Delta^4)}{2 \pi \Delta^2}&-\frac{id}{\Delta^2}\\
    \frac{i}{\Delta^2}&1&-\frac{id}{\Delta^2}&\frac{id(\pi+2\pi\Delta^4)}{\pi \Delta^2}\end{pmatrix}.
\end{align}
Note that $\vartheta$ is the Siegel theta function defined for an $m\times m$ matrix $\Gamma$ and $m$-vector $\mathbf z$ as
\begin{align}
    \vartheta(\Gamma,\mathbf z)=&\sum_{\mathbf t \in \mathbb Z^m}e^{i\pi \mathbf t^T\Gamma \mathbf t+2\pi i \mathbf t^T\mathbf z}.
\end{align}

\begin{figure}
    \centering
    \includegraphics[width=0.95\linewidth]{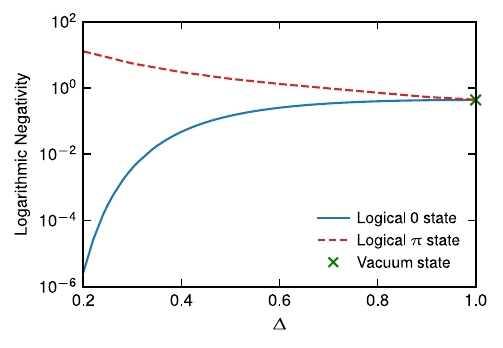}
    \caption{ZGW logarithmic negativity for the $0$-logical and $\pi$-logical GKP state at different levels of squeezing $\Delta$. The cross represents the negativity of the vacuum state,  coinciding with the negativity of the zero-squeezing limit ($\Delta \rightarrow 1$) of both 0- and $\pi$-logical GKP states.}
    \label{fig:neg-vs-squeezing}
\end{figure}
In Fig.~\ref{fig:neg-vs-squeezing} we plot the ZGW logarithmic negativity $\log \mathcal M_{\hat{\rho}_0^\Delta}$, where $\log$ refers to the natural logarithm, with respect to the level of squeezing $\Delta$ of a single realistic 0-logical GKP qutrit state.
This result has immediate implications for the simulation of circuits of practical interest. Let us consider, for example, the case of $\Delta=0.25$, corresponding to 12dB of squeezing \cite{matsuura2020} and representing a level of squeezing estimated to be necessary to achieve (for generic GKP states) fault-tolerant quantum computation \cite{noh2022}. Since $\mathcal M_{\hat{\rho}_0^\Delta} \approx e^{3 \times 10^{-4}}$ at this level of squeezing, and due to the multiplicativity of the negativity, simulating a thousand input modes requires only a minor overhead (less than twice as many samples) with respect to simulating one input mode with the same squeezing level. This scaling is orders of magnitude more efficient than existing methods in the literature. The total simulation cost will depend on additional factors, such as the size of the symplectic matrix $S$ and the computation of the vector $\mathbf t$, as shown in Theorem~\ref{ZGWcovariance}. However, this extra cost will still scale polynomially with the number of modes.

We also plot the ZGW logarithmic negativity for the $\pi$ state $\ket{\psi_\pi}=\frac{1}{\sqrt 3}(\ket{0_L}+\ket{1_L}-\ket{2_L})$, which encodes a ``magic'' non-stabilizer state, enabling non-Clifford operations~\cite{anwar2012} (Fig. \ref{fig:neg-vs-squeezing}). As illustrated, significant negativity persists across all values of $\Delta$, indicating a substantial overhead when simulating large circuits compared to smaller ones, as expected. 

While we have explicitly considered the case of encoded Clifford operations, our method can also in principle tackle the simulation of circuits with encoded non-Clifford operations, as these can be implemented by injecting suitable non-stabilizer states, such as the phase state $\ket{\psi_\pi}$ for qutrits. Such states can be pushed to the input of the computation, leaving the rest of the circuit as Clifford-only. Finally, although we have focussed here on finite-squeezing, losses on GKP states have been shown to induce non-Pauli channels, and therefore could also increase the negativity of the Zak-Gross Wigner function~\cite{harris2025logical}.

{\it Conclusive remarks.} 
In summary, we have introduced an algorithm for efficiently simulating GKP-encoded Clifford circuits acting on ideal stabilizer GKP states, leveraging the ZGW function. We then extended the algorithm to the practical setting of realistic, finitely squeezed GKP states, encoding arbitrary logical states. The complexity of the simulation is directly linked to the negativity of the ZGW function, thereby identifying the latter as a computational resource. For stabilizer GKP inputs, it decreases with increasing squeezing. By direct calculation, we have shown that this, in turn, implies only minor simulation overhead in the large squeezing regime. Therefore, our algorithm enables the simulation of large-scale circuits in the regime relevant to experiments targeting fault-tolerant quantum computing, positioning it as a potentially valuable tool for validating bosonic platforms designed for this purpose.

A relevant open question is whether our results can be extended to the case of GKP qubits. Furthermore, our framework has the potential to inspire novel techniques for simulating circuits based on bosonic codes beyond GKP. A promising direction involves the development of a comprehensive quasi-probability framework specifically tailored for bosonic codes. This could pave the way for versatile simulation algorithms, offering reduced computational costs compared to existing methods. 

\begin{acknowledgments}
We acknowledge useful discussions with J. Davis and U. Chabaud, who also provided comments on the manuscript. A.F.\ and G.F.\ acknowledge funding from the European Union’s Horizon Europe Framework Programme (EIC Pathfinder Challenge project Veriqub) under Grant Agreement No.\ 101114899.
G.F.\ acknowledges financial support from the Swedish Research Council (Vetenskapsrådet) through the project grant DAIQUIRI. G.F.\ and C.C.\ acknowledge support from the Knut and Alice Wallenberg Foundation through the Wallenberg Center for Quantum Technology (WACQT).
O.H. acknowledges the support from CREST Grant Number JPMJCR23I3, Japan. JBV is funded by from Ayuda Ramón y Cajal 2022 (RYC2022-036209-I) funded by MICIU/AEI/10.13039/501100011033 and ESF+; Ayuda Consolidación CNS2023-145392 funded by MICIU/AEI/10.13039/501100011033 and European Union NextGenerationEU/PRTR; Horizon Europe Project FoQaCiA Horizon Europe Project FoQaCiA GA 101070558; Project PID2021-128970OA-100 funded by MICIU/AEI/10.13039/501100011033 and by ERDF/EU; Project FEDER C-EXP-256-UGR23 Consejería de Universidad, Investigación e Innovación y UE Programa
FEDER Andalucía 2021-2027.
\end{acknowledgments}
\bibliographystyle{apsrev4-2}
\bibliography{./main.bib}
\begin{widetext}
    \appendix
\section{Modified Stratonovich-Weyl Axioms}
\label{appendix:proof1}

Here, we extend the modified Stratonovich-Weyl axioms, as introduced in Ref.~\cite{davis2024},  to the case of multiple modes. Within the original, unmodified version of these axioms, there is a one-to-one correspondence between the Wigner function and the set of operators~\cite{stratonovich1956,brif1999,davis2024}. The weaker, modified axioms that were considered in Ref.~\cite{davis2024} and that we consider here do not guarantee one-to-one correspondence. However, they do provide useful properties such as traciality that we will use to prove simulatability.

We will make use of the properties 
 $\hat D(\mathbf a+\mathbf b)=\hat D(\mathbf a)\hat D(\mathbf b)e^{i[\mathbf a,\mathbf b]/2}$. Hence we see that $\hat D(\mathbf a)\hat D(\mathbf b)=\hat D(\mathbf b)\hat D(\mathbf a)$ when $e^{i[\mathbf a,\mathbf b]/2}=e^{i[\mathbf b,\mathbf a]/2}$. This is when $[\mathbf a,\mathbf b]=0 \mod 2 \pi$. 
Furthermore, $\hat T_{\mathbf a},T_{\mathbf b}$ commute when $[\mathbf a,\mathbf b]=0\mod 2\pi/\ell^2$. In the following, we will also extensively use the fact that $\mathbf a_{\mathbf X}^T \mathbf a_{\mathbf Z} = \mathbf a_{\mathbf Z}^T \mathbf a_{\mathbf X}$ due to the fact that it is a dot product.

\begin{defi}
    \label{def:modified-SW}
    The modified Stratonovich-Weyl axioms are \cite{davis2024}
    \begin{enumerate}
        \item \textbf{Linearity}. $W_{\hat C+\hat D}({\boldsymbol\eta})=W_{\hat C}({\boldsymbol\eta})+W_{\hat D}({\boldsymbol\eta})$
        \item \textbf{Reality}. $W_{\hat C^\dagger}({\boldsymbol\eta})=(W_{\hat C}({\boldsymbol\eta}))^*$
        \item \textbf{Standardization}. $\int \dd {\boldsymbol\eta} W_{\hat C}({\boldsymbol\eta})=\Tr(\hat C)$
        \item \textbf{Covariance}. $W_{\hat T_{\mathbf b} \hat C \hat T_{\mathbf b}^\dagger}({\boldsymbol\eta})=W_{\hat C}({\boldsymbol\eta} + \ell {\mathbf b})$ for all ${\mathbf b\in \mathbb R^{2n}}$.
        \item \textbf{Traciality}. $\int \dd {\boldsymbol\eta}  W_{\hat C}({\boldsymbol\eta})W_{\hat D}({\boldsymbol\eta})=\Tr(\mathcal E(\hat C)\hat D)=\Tr(\hat C\mathcal E(\hat D))=\Tr(\mathcal E(\hat C)\mathcal E(\hat D))$, for the twirling map
        \begin{align}
        \label{eq:twirlingmap}
            \mathcal E(\hat C)=\int \dd \mathbf s \hat P_{\mathbf s}\hat C \hat P_{\mathbf s}.
        \end{align}
        where we define $\hat P_{\mathbf s}=\hat T_{\mathbf s}\hat P_{\mathbf 0}\hat T_{-\mathbf s}$ and $\hat P_{\mathbf 0}$ is the GKP projector defined as
        \begin{align}
        \label{eq:projector-gkp}
        \hat P_{\mathbf 0}=\left(\sum_j\ket{j_{\text{GKP}}}\bra{j_{\text{GKP}}}\right)^{\otimes n}.
        \end{align}
    \end{enumerate}
\end{defi}

\begin{lemma}
    \label{lemma:SW-axioms}
    The multimode ZGW function satisfies the modified version of the Stratonovich–Weyl axioms.
\end{lemma}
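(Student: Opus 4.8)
The plan is to verify each of the five modified Stratonovich–Weyl axioms in turn for the multimode Zak-Gross Wigner function $W_{\hat\rho}(\boldsymbol\eta)=\Tr(\hat\rho\hat A_{\boldsymbol\eta})$, with $\hat A_{\boldsymbol\eta}$ as in Eq.~(\ref{eq:phasepoint}). Linearity is immediate from the linearity of the trace, and Reality follows because $\hat A_{\boldsymbol\eta}$ is Hermitian: taking the adjoint of the defining sum sends $\mathbf a\mapsto-\mathbf a$ in $\hat T_{\mathbf a}^\dagger=\hat T_{-\mathbf a}$, flips the sign of $[\mathbf a,\boldsymbol\eta]$, and flips $\mathbf a_{\mathbf X}^T\mathbf a_{\mathbf Z}$; since $d$ is odd the phase $e^{i\pi \mathbf a_{\mathbf X}^T\mathbf a_{\mathbf Z}}$ is $\pm1$ and the net effect of relabelling $\mathbf a\to-\mathbf a$ recovers the original operator, so $\hat A_{\boldsymbol\eta}^\dagger=\hat A_{\boldsymbol\eta}$ and $W_{\hat C^\dagger}=W_{\hat C}^*$.

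For Standardization I would integrate $\hat A_{\boldsymbol\eta}$ over $\boldsymbol\eta\in[0,d\ell)^{\times 2n}$: only the $\mathbf a=\mathbf 0$ term survives, since $\int_{[0,d\ell)^{2n}}e^{i\ell[\mathbf a,\boldsymbol\eta]}\dd\boldsymbol\eta$ vanishes for every nonzero integer vector $\mathbf a$ (the phase $\ell[\mathbf a,\boldsymbol\eta]=\sqrt{2\pi/d}\,[\mathbf a,\boldsymbol\eta]$ completes an integer number of periods over each coordinate's range $d\ell$, because $\ell\cdot d\ell=2\pi$). The surviving term gives $\hat T_{\mathbf 0}=\mathds 1$ times the normalization, which I will fix so that $\int\dd\boldsymbol\eta\,\hat A_{\boldsymbol\eta}=\mathds 1$, yielding $\int\dd\boldsymbol\eta\,W_{\hat C}(\boldsymbol\eta)=\Tr\hat C$. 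For Covariance, I would use $\hat T_{\mathbf b}\hat T_{\mathbf a}\hat T_{\mathbf b}^\dagger=e^{i\ell^2[\mathbf a,\mathbf b]}\hat T_{\mathbf a}$ (or the sign-appropriate version from $\hat D(\mathbf a)\hat D(\mathbf b)=\hat D(\mathbf b)\hat D(\mathbf a)e^{i[\mathbf a,\mathbf b]}$ with $\ell\mathbf a$, $\ell\mathbf b$), so that $\Tr(\hat T_{\mathbf b}^\dagger\hat C\hat T_{\mathbf b}\hat A_{\boldsymbol\eta})=\Tr(\hat C\,\hat T_{\mathbf b}\hat A_{\boldsymbol\eta}\hat T_{\mathbf b}^\dagger)$, and the conjugation multiplies the $\mathbf a$-th term by $e^{i\ell^2[\mathbf a,\mathbf b]}=e^{i\ell[\mathbf a,\ell\mathbf b]}$, which is exactly the shift $\boldsymbol\eta\mapsto\boldsymbol\eta+\ell\mathbf b$ in $e^{i\ell[\mathbf a,\boldsymbol\eta]}$. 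Care must be taken to track which side the conjugation lands on and hence the sign convention, and to check that the extra $e^{i\pi\mathbf a_{\mathbf X}^T\mathbf a_{\mathbf Z}}$ factor is inert under this manipulation (it is, since conjugation by $\hat T_{\mathbf b}$ does not touch it).

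The main obstacle is Traciality, which is the only genuinely nontrivial axiom. Here I would compute $\int\dd\boldsymbol\eta\,\Tr(\hat C\hat A_{\boldsymbol\eta})\Tr(\hat D\hat A_{\boldsymbol\eta})=\sum_{\mathbf a,\mathbf b}(\text{phases})\Tr(\hat C\hat T_{\mathbf a})\Tr(\hat D\hat T_{\mathbf b})\int\dd\boldsymbol\eta\,e^{i\ell[\mathbf a+\mathbf b,\boldsymbol\eta]}$; the $\boldsymbol\eta$-integral forces $\mathbf a+\mathbf b\equiv\mathbf 0$, but over the bounded domain $[0,d\ell)^{2n}$ it actually only forces $\mathbf a+\mathbf b\equiv\mathbf 0\pmod d$ componentwise (since $e^{i\ell[\mathbf a+\mathbf b,\boldsymbol\eta]}$ has period $\ell$ in each coordinate, which fits $d$ times into $d\ell$), so one does not get $\mathbf b=-\mathbf a$ but rather $\mathbf b=-\mathbf a+d\mathbf k$. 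Resumming over $\mathbf k$ produces precisely the projector structure: $\sum_{\mathbf k}(\text{phase})\,\hat T_{d\mathbf k}$ is proportional to $\hat P_{\mathbf 0}$ up to displacements, and more carefully one recognizes $\int\dd\mathbf s\,\hat P_{\mathbf s}(\cdot)\hat P_{\mathbf s}=\mathcal E(\cdot)$ appearing after rearranging the $\mathbf a$-sum. The key computational lemma I will need is an explicit Fourier/Poisson-type identity expressing $\sum_{\mathbf k\in\mathbb Z^{2n}}e^{i\pi d\,\mathbf k^T(\dots)}\hat T_{d\mathbf k}$ in terms of $\hat P_{\mathbf 0}$ (this is where $d$ odd is used again, and where the single-mode proof of Ref.~\cite{davis2024} can be followed closely and tensored); once that is in hand, the three equalities $\Tr(\mathcal E(\hat C)\hat D)=\Tr(\hat C\mathcal E(\hat D))=\Tr(\mathcal E(\hat C)\mathcal E(\hat D))$ follow from $\mathcal E$ being a self-adjoint idempotent (a projection in Hilbert–Schmidt inner product), which I would verify directly from $\hat P_{\mathbf s}^2=\hat P_{\mathbf s}$ and $\mathcal E^2=\mathcal E$. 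Throughout, the multimode case adds only bookkeeping over the tensor factors relative to Ref.~\cite{davis2024}, so I would state the single-mode identities, note they tensorize, and emphasize the domain-of-integration subtlety (mod $d$ vs.\ exact) as the one place where the argument needs real care.
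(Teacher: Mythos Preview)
Your treatment of Linearity, Reality, Standardization, and Covariance matches the paper's proof essentially line for line (the paper argues Reality by manipulating $W_{\hat C^\dagger}$ directly rather than by showing $\hat A_{\boldsymbol\eta}^\dagger=\hat A_{\boldsymbol\eta}$, but these are equivalent; your remark that $d$ odd is needed here is a red herring, since $\mathbf a_{\mathbf X}^T\mathbf a_{\mathbf Z}\in\mathbb Z$ already makes $e^{i\pi\mathbf a_{\mathbf X}^T\mathbf a_{\mathbf Z}}=\pm1$).

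For Traciality, however, there is a concrete error in your plan. You claim that integrating $e^{i\ell[\mathbf a+\mathbf b,\boldsymbol\eta]}$ over $\boldsymbol\eta\in[0,d\ell)^{2n}$ only enforces $\mathbf a+\mathbf b\equiv\mathbf 0\pmod d$. In fact it enforces $\mathbf a+\mathbf b=\mathbf 0$ exactly: for integer $c$, $\int_0^{d\ell}e^{i\ell c\eta}\dd\eta$ vanishes unless $c=0$, because $\ell c\cdot d\ell=2\pi c$ traverses an integer number of full periods. (This is precisely the identity you already invoke for Standardization.) So there is no residual $\mathbf k$-sum left over from the integral; the projector structure cannot arise the way you describe. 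Your approach is salvageable---after setting $\mathbf b=-\mathbf a$ one is left with $\sum_{\mathbf a\in\mathbb Z^{2n}}\Tr(\hat C\hat T_{\mathbf a})\hat T_{-\mathbf a}$, and the GKP projector emerges from splitting this infinite sum as $\mathbf a=\mathbf u+d\mathbf k$ with $\mathbf u\in\mathbb Z_d^{2n}$, $\mathbf k\in\mathbb Z^{2n}$---but this is not what you wrote, and carrying it out is essentially as hard as the parity-operator identity (Lemma~\ref{lemma:phase-point-eq-parity}).

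The paper avoids this direct Fourier computation altogether. Its route to Traciality is to invoke Lemma~\ref{lemma:zak-gross-in-terms-of-gross} (the identification $W_{\hat\rho}(\boldsymbol\eta)=\bar W_{\hat{\bar\rho}(\mathbf u)}(\mathbf t)$ with the Gross Wigner function of the error-corrected state) to evaluate $\int\dd\boldsymbol\eta\,W_{\hat C}(\boldsymbol\eta)\hat A_{\boldsymbol\eta}$: changing variables to $(\mathbf u,\mathbf t)$ and using the known Gross-Wigner reconstruction $\sum_{\mathbf t}\bar W_{\hat\rho}(\mathbf t)\hat{\bar A}_{\mathbf t}=\hat\rho$ gives $\int\dd\mathbf u\,\hat T_{\mathbf u}\hat C_L(\mathbf u)\hat T_{-\mathbf u}=\mathcal E(\hat C)$ directly. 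The three equalities among $\Tr(\mathcal E(\hat C)\hat D)$, $\Tr(\hat C\mathcal E(\hat D))$, $\Tr(\mathcal E(\hat C)\mathcal E(\hat D))$ are then handled by cyclicity and $\hat P_{\mathbf s}\hat P_{\mathbf s'}=\delta_{\mathbf s,\mathbf s'}\hat P_{\mathbf s}$, as you also anticipated.
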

\begin{proof}
    \textbf{Linearity}. The ZGW function satisfies linearity thanks to the linearity of the trace in its definition Eq.~(3), i.e., \begin{align}
        W_{\hat C+\hat D}(\boldsymbol{\eta})\propto \Tr((\hat C+\hat D)\hat{A}_{\boldsymbol{\eta}})= \Tr(\hat C\hat {A}_{\boldsymbol{\eta}})+\Tr(\hat D \hat{A}_{\boldsymbol{\eta}}).
    \end{align}
    However, it is not a one-to-one map since we can define different Wigner functions depending on the choice of $d$.
    
    \textbf{Reality}. We see this from the fact that we can substitute $\mathbf a\to -\mathbf a$ in the summation and that $\mathbf{a_X}^T\mathbf{a_Z}$ is always an integer. Hence,
    \begin{align}
        W_{\hat{C}^\dagger}({\boldsymbol\eta})=&\frac{1}{(2\pi)^n} \sum_{\mathbf a\in \mathbb Z^{2n}} \Tr(\hat C^\dagger e^{i\ell[ \mathbf a,{\boldsymbol\eta}]+i\pi \mathbf{a_X}^T\mathbf{a_Z}}\hat T_{\mathbf a}) \nonumber\nonumber  \\
        =&\frac{1}{(2\pi)^n} \sum_{\mathbf a\in \mathbb Z^{2n}} \Tr(\hat C^\dagger e^{-i\ell[ \mathbf a,{\boldsymbol\eta}]-i\pi \mathbf{a_X}^T\mathbf{a_Z}}\hat T_{-\mathbf a}) \nonumber \nonumber \\
        =&\frac{1}{(2\pi)^n} \sum_{\mathbf a\in \mathbb Z^{2n}} \Tr(\left(\left( e^{i\ell[ \mathbf a,{\boldsymbol\eta}]+i\pi \mathbf{a_X}^T\mathbf{a_Z}}\hat T_{\mathbf a}\hat C\right)^T\right)^*) \nonumber \nonumber \\
        =&\frac{1}{(2\pi)^n} \sum_{\mathbf a\in \mathbb Z^{2n}} \Tr( e^{i\ell[ \mathbf a,{\boldsymbol\eta}]+i\pi \mathbf{a_X}^T\mathbf{a_Z}}\hat T_{\mathbf a}\hat C)^* \nonumber \nonumber \\
        =&\frac{1}{(2\pi)^n} \sum_{\mathbf a\in \mathbb Z^{2n}} \Tr(\hat C e^{i\ell[ \mathbf a,{\boldsymbol\eta}]+i\pi \mathbf{a_X}^T\mathbf{a_Z}}\hat T_{\mathbf a})^* \nonumber \nonumber \\
        =&\left( W_{\hat C}({\boldsymbol\eta})\right)^*
    \end{align}
    where we substituted $\mathbf a$ for $-\mathbf a$ in the second line, we rewrite the third line using the Hermitian conjugate, then in the fourth line, we use that the trace of the transpose of an operator is equal to the trace of the operator.
    
    \textbf{Standardization.}  We see that
    \begin{align}
    \int_{\mathbb T_{d\ell}^{2n}} \dd {\boldsymbol\eta} W_{\hat \rho}({\boldsymbol\eta})  =&\frac{1}{(2\pi)^n}\int_{\mathbb T_{d\ell}^{2n}} \dd {\boldsymbol\eta}  \sum_{\mathbf a\in \mathbb Z^{2n}} \Tr(\hat \rho e^{i\ell[ \mathbf a,{\boldsymbol\eta}]+i\pi \mathbf{a_X}^T\mathbf{a_Z}}\hat T_{\mathbf a}) \nonumber \\
    =&\sum_{\mathbf a\in \mathbb Z^{2n}} \delta_{\mathbf a,\mathbf 0}\Tr(\hat \rho e^{i\pi \mathbf{a_X}^T\mathbf{a_Z}}\hat T_{\mathbf a})\nonumber \\
    =&\Tr(\hat \rho) 
    \end{align}
    where we have used that
    \begin{align}
        \frac{1}{(2\pi)^n} \int \dd_{\mathbb T^{2n}_{d\ell}} d\boldsymbol{\eta} e^{i\ell [ \mathbf a,{\boldsymbol\eta}]}=&\frac{1}{(2\pi)^n} \int \dd_{\mathbb T^{2n}_{2\pi/\ell}} d\boldsymbol{\eta} e^{i\ell \boldsymbol{\eta}^T\Omega\mathbf a}\nonumber \\
        =&\frac{1}{(2\pi)^n} \int \dd_{\mathbb T^{2n}_{2\pi}} d\boldsymbol{\eta} e^{i\mathbf a ^T\Omega\boldsymbol{\eta}}\nonumber \\
        =&\delta_{\mathbf a^T\Omega,\mathbf 0}\nonumber \\
        =&\delta_{\mathbf a,\mathbf 0}.
    \end{align}
    
    \textbf{Covariance}. For any choice of $\mathbf b\in\mathbb R^{2n}$ we have
    \begin{align}
        W_{\hat T_{\mathbf b} \hat \rho \hat T_{\mathbf b}^\dagger}({\boldsymbol\eta})=& \frac{1}{(2\pi)^n} \sum_{\mathbf a\in \mathbb Z^{2n}} \Tr(\hat T_{\mathbf b} \hat\rho \hat T_{\mathbf b}^\dagger e^{i\ell[ \mathbf a,{\boldsymbol\eta}]
        +i\pi \mathbf{a_X}^T\mathbf{a_Z}}\hat T_{\mathbf a})\nonumber  \\
        =& \frac{1}{(2\pi)^n} \sum_{\mathbf a\in \mathbb Z^{2n}} \Tr( \hat\rho e^{i\ell[\mathbf a,{\boldsymbol\eta}]+i\pi \mathbf{a_X}^T\mathbf{a_Z}}\hat T_{\mathbf b}^\dagger \hat T_{\mathbf a}\hat T_{\mathbf b}) \nonumber \\
        =& \frac{1}{(2\pi)^n} \sum_{\mathbf a\in \mathbb Z^{2n}} \Tr( \rho e^{i\ell[\mathbf a,{\boldsymbol\eta}]+i\pi \mathbf{a_X}^T\mathbf{a_Z}}e^{2\pi i [\mathbf a,\mathbf b]/d}\hat T_{\mathbf a}\hat T_{\mathbf b}^\dagger \hat T_{\mathbf b})\nonumber  \\
        =& \frac{1}{(2\pi)^n} \sum_{\mathbf a\in \mathbb Z^{2n}} \Tr( \hat \rho e^{i\ell[\mathbf a,{\boldsymbol\eta}+\mathbf b\ell]+i\pi \mathbf{a_X}^T\mathbf{a_Z}}\hat T_{\mathbf a}) \nonumber \\
        =& W_{\hat \rho}({\boldsymbol\eta}+\ell\mathbf b).
    \end{align}
    Note that this proves covariance under continuous displacements but not under the action of Clifford operations, which will be shown later.

    \textbf{Traciality}. 
    We wish to demonstrate that
    \begin{align}
        \int \dd {\boldsymbol\eta} W_{\hat C}({\boldsymbol\eta})W_{\hat D}({\boldsymbol\eta})=\Tr(\mathcal E(\hat C)\hat D)=\Tr(\hat C\mathcal E(\hat D))=\Tr(\mathcal E(\hat C)\mathcal E(\hat D)).
    \end{align}
    First, we will prove that
    \begin{align}
        \Tr(\mathcal E(\hat C)\hat D)=\Tr(\hat C \mathcal E(\hat D)).
    \end{align}
    We have
    \begin{align}
       \Tr(\mathcal E(\hat C)\hat D)=& \Tr(\int \dd \mathbf s \hat P_{\mathbf s}\hat C \hat P_{\mathbf s}\hat D)\nonumber \\
       =& \int \dd \mathbf s \Tr(\hat P_{\mathbf s}\hat C \hat P_{\mathbf s}\hat D)\nonumber \\
       =& \int \dd \mathbf s \Tr(\hat C \hat P_{\mathbf s}\hat D\hat P_{\mathbf s})\nonumber \\
       =& \Tr(\hat C \mathcal E(\hat D)).
    \end{align}
    We also have
    \begin{align}
        \Tr(\mathcal E(\hat C)\mathcal E(\hat D))=&\Tr(\int \dd \mathbf s\int \dd \mathbf s' \hat P_{\mathbf s}\hat C \hat P_{\mathbf s} \hat P_{\mathbf s'}\hat D \hat P_{\mathbf s'} )\nonumber \\
        =&\int \dd \mathbf s\int \dd \mathbf s'\Tr(\hat P_{\mathbf s'}\hat P_{\mathbf s}\hat C \hat P_{\mathbf s} \hat P_{\mathbf s'}\hat D  )\nonumber \\
        =&\int \dd \mathbf s \Tr(\hat P_{\mathbf s}\hat C \hat P_{\mathbf s}\hat D  )\nonumber \\
        =&\Tr(\mathcal E(\hat C)\hat D  )
    \end{align}
    where we have used that $\hat P_{\mathbf s}\hat P_{\mathbf s'}=\delta_{\mathbf s,\mathbf s'}\hat P_{\mathbf s}$.
   Next, we use Lemma 1 which gives an expression for $W_{\hat \rho}({\boldsymbol\eta})=\bar{W}_{\hat \rho_L(\mathbf u)}(\mathbf t)$. Therefore, we can write
    \begin{align}
        \int \dd {\boldsymbol\eta} W_{\hat C}({\boldsymbol\eta})\hat A_{\boldsymbol\eta}=& \int \dd \mathbf u\sum_{\mathbf t\in\mathbb Z^{2n}} \bar{W}_{\hat C_L(\mathbf u)}(\mathbf t)A_{\mathbf u+\mathbf t}\nonumber \\
        =& \int \dd \mathbf u\sum_{\mathbf t\in \mathbb Z^{2n}} \bar{W}_{\hat C_L(\mathbf u)}(\mathbf t)T_{\mathbf u}\hat T_{\mathbf t}A_{\mathbf 0}\hat T_{-\mathbf t}\hat T_{-\mathbf u}\nonumber \\
        =& \int \dd \mathbf u T_{\mathbf u}\hat C_L(\mathbf u)T_{-\mathbf u}\nonumber \\
        =& \mathcal E(\hat C),
    \end{align}
    which allows us to derive the main result, i.e.,
    \begin{align}
        \int \dd \boldsymbol{\eta} W_{\hat C}(\boldsymbol{\eta})W_{\hat D}(\boldsymbol{\eta})=&\int \dd \boldsymbol{\eta} W_{\hat C}(\boldsymbol{\eta}) \Tr(\hat D \hat A_{\boldsymbol{\eta}})\nonumber \\
        =&\Tr(\hat D\int \dd \boldsymbol{\eta} W_{\hat C}(\boldsymbol{\eta})  \hat A_{\boldsymbol{\eta}})\nonumber \\
        =&\Tr(\hat D \mathcal E(\hat C)).
    \end{align}
    \end{proof}

\section{Connection to Gross Wigner function}
\label{appendix:connect-zak-to-gross}

As an example, we begin by calculating the Gross Wigner function of the single-qubit computational basis state $\ket j$ as
\begin{align}
\label{eq:comp-basis-state}
    \bar{W}_{\ket j\bra j}(\mathbf t)=&\frac{1}{d}\sum_{\mathbf a \in \mathbb Z^{2}_d} \omega^{2^{-1}a_Xa_Z}\omega^{-(t_Xa_Z-t_Za_X)}\Tr\left(\hat{\bar T}_{a_X}\hat{\bar T}_{a_Z}\ket{j}\bra{j}\right)\nonumber \\
    =&\frac{1}{d}\sum_{\mathbf a \in \mathbb Z^{2}_d} \omega^{2^{-1}a_Xa_Z}\omega^{-(t_Xa_Z-t_Za_X)}\omega^{ja_Z}\Tr\left(\ket{j+a_X}\bra{j}\right)\nonumber \\
    =&\frac{1}{d}\sum_{\mathbf a \in \mathbb Z^{2}_d} \omega^{2^{-1}a_Xa_Z}\omega^{-(t_Xa_Z-t_Za_X)}\omega^{ja_Z}\delta_{a_X,0}\nonumber \\
    =&\frac{1}{d}\sum_{a_Z \in \mathbb Z_d} \omega^{-t_Xa_Z}\omega^{ja_Z}\nonumber \\
    =&\delta_{t_X,j},
\end{align}
where we use Eq.~(8) in the first line.

The DV parity operator is defined as $\hat{\bar{\Pi}}_d=(\sum_{j=0}^{d-1}\ket{-j}\bra{j})^{\otimes n}$. We define the logical parity operator as\\${\hat\Pi_d^L=\left(\sum_{j=0}^{d-1}\ket{-j_{\text{GKP}}}\bra{j_{\text{GKP}}}\right)^{\otimes n}}$. 

\begin{lemma}
    \label{lemma:phase-point-eq-parity}
    The phase point operator $A_{\boldsymbol\eta}$ at ${\boldsymbol\eta}=\mathbf 0$ coincides with the logical DV parity operator.
\end{lemma}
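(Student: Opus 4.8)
The plan is to compute $\hat A_{\mathbf 0}$ directly from its defining series~\eqref{eq:phasepoint} and recognize the result as $\hat\Pi_d^L$. Setting $\boldsymbol\eta=\mathbf 0$ kills the $e^{i\ell[\mathbf a,\boldsymbol\eta]}$ factor, so
\begin{align}
\hat A_{\mathbf 0}=\frac{1}{(2\pi)^n}\sum_{\mathbf a\in\mathbb Z^{2n}}e^{i\pi\mathbf a_X^T\mathbf a_Z}\,\hat T_{\mathbf a}
=\frac{1}{(2\pi)^n}\sum_{\mathbf a\in\mathbb Z^{2n}}e^{i\pi\mathbf a_X^T\mathbf a_Z}e^{i\pi\mathbf a_X^T\mathbf a_Z/d}\,\hat T_{\mathbf a_X}\hat T_{\mathbf a_Z},
\end{align}
using Eq.~\eqref{eq:cv-comm}. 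Since everything factorizes over modes, it suffices to treat $n=1$ and tensor up at the end; write $\mathbf a=(s,t)^T$. First I would sum over $s\in\mathbb Z$: the operator $\hat T_s=e^{-is\ell\hat p}$ is a position shift by $s\ell$, and the $s$-sum with the quadratic phase $e^{i\pi s t(1+1/d)}$ produces a Dirac comb in position. Concretely, acting on a position eigenket $\ket{x}$, $\sum_s e^{i\pi st(d+1)/d}\ket{x+s\ell}$ forces $x$ onto the lattice $\ell\mathbb Z$ (up to the $t$-dependent phase), which is exactly the support of the GKP code. Then the remaining $t$-sum over $e^{it\ell\hat q}$ with its phase collapses the off-diagonal position content and pins the lattice label, yielding the reflection $\ket{j_{\text{GKP}}}\mapsto\ket{-j_{\text{GKP}}}$ within the code space and annihilating everything orthogonal to it.

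The cleanest bookkeeping route is probably to evaluate matrix elements $\bra{y}\hat A_{\mathbf 0}\ket{x}$ for position eigenkets $\ket{x},\ket{y}$, turning the two integer sums into Poisson-summation/Dirac-comb identities: the $s$-sum gives $\delta$-functions enforcing $y-x\in\ell\mathbb Z$, and the $t$-sum gives $\delta$-functions enforcing $y\in\ell\mathbb Z$; combining them leaves delta support exactly on $x=-y\in\ell\mathbb Z$, i.e. $\bra{y}\hat A_{\mathbf 0}\ket{x}\propto\sum_{m}\delta(x-m\ell)\delta(y+m\ell)$, which is the kernel of $\sum_m\ket{-m\ell}\bra{m\ell}$. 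Regrouping $m=dn+j$ reproduces $\sum_{j=0}^{d-1}\ket{-j_{\text{GKP}}}\bra{j_{\text{GKP}}}=\hat\Pi_d^L$ (here one uses that $d$ is odd so that $j\mapsto -j\bmod d$ together with the envelope index is consistent; the phases $e^{i\pi st(d+1)/d}$ are what make the reflection land on $-j$ rather than some shifted label). Taking the $n$-fold tensor product gives $\hat A_{\mathbf 0}=\hat\Pi_d^L$, with the normalization fixed by comparing against Lemma~\ref{lemma:zak-gross-in-terms-of-gross} or Corollary~\ref{corollary:state} at $\boldsymbol\eta=\mathbf 0$, since $\hat{\bar A}_{\mathbf 0}=\frac1{d^n}\sum_{\mathbf a}\hat{\bar T}_{\mathbf a}$ is well known to equal the DV parity $\hat{\bar\Pi}_d$.

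The main obstacle is the careful handling of the divergent sums/integrals: $\hat A_{\mathbf 0}$ is an unbounded operator acting on non-normalizable kets, so the Dirac-comb manipulations must be justified distributionally, and one has to track the overall ($\ell$- and $d$-dependent) prefactor through the Poisson summation steps. A convenient shortcut that sidesteps most of this: invoke Lemma~\ref{lemma:zak-gross-in-terms-of-gross}, which identifies $W_{\hat\rho}(\mathbf 0)$ with $\bar W_{\hat{\bar\rho}(\mathbf 0)}(\mathbf 0)=\Tr(\hat{\bar\Pi}_d\,\hat{\bar\rho}(\mathbf 0))$ where $\hat{\bar\rho}(\mathbf 0)$ is the error-corrected logical state; since this holds for all $\hat\rho$ and $\Tr(\hat\rho\,\hat A_{\mathbf 0})=\Tr(\hat\rho_L(\mathbf 0)\hat\Pi_d^L)$ would follow, one concludes $\hat A_{\mathbf 0}=\hat\Pi_d^L$ as operators (modulo the code-space projection implicit in the twirl, which is already present in how $\hat A_{\boldsymbol\eta}$ acts). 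I would present the direct position-basis computation as the primary argument and mention the Lemma-based consistency check as corroboration.
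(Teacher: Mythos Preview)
Your overall strategy---compute $\hat A_{\mathbf 0}$ in the position basis and identify the result with $\hat\Pi_d^L$---is the paper's approach. However, the execution contains a substantive error. The kernel you claim, $\bra{y}\hat A_{\mathbf 0}\ket{x}\propto\sum_m\delta(x-m\ell)\,\delta(y+m\ell)$, is the lattice restriction of the \emph{continuous-variable} parity $\int dx\,\ket{-x}\bra{x}$, not the logical parity. Your regrouping step then fails: $\sum_m\ket{-m\ell}\bra{m\ell}=\sum_{n,j}\ket{-(dn+j)\ell}\bra{(dn+j)\ell}$ has the bra and ket envelope indices locked as $n'=-n$, whereas
\[
\hat\Pi_d^L=\sum_{j}\ket{-j_{\text{GKP}}}\bra{j_{\text{GKP}}}\propto\sum_{j,n,n'}\ket{(dn'-j)\ell}\bra{(dn+j)\ell}
\]
has $n,n'$ independent. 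A direct computation shows the true support of $\bra{y}\hat A_{\mathbf 0}\ket{x}$ is $x,y\in\ell\mathbb Z$ with $x+y\in d\ell\mathbb Z$, not $x=-y$; the phase $e^{i\pi st(d+1)/d}$ is precisely what relaxes the constraint from $x+y=0$ to $x+y\equiv 0\pmod{d\ell}$, and your sketch (``the $t$-sum gives $\delta$-functions enforcing $y\in\ell\mathbb Z$'') drops this mechanism. The paper organizes the computation by splitting $\mathbf a=\mathbf u+d\mathbf t$ with $\mathbf u\in\mathbb Z_d^{2n}$ and $\mathbf t\in\mathbb Z^{2n}$: the $\mathbf t_Z$-sum produces the Dirac comb pinning $\mathbf x$ to $\ell\mathbb Z^n$, and the finite $\mathbf u_Z$-sum then yields the Kronecker delta $\delta_{\mathbf u_X,-2\mathbf t_Z}$ that implements reflection only at the logical (mod-$d$) level, leaving the envelope index free.

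Separately, your proposed shortcut via Lemma~\ref{lemma:zak-gross-in-terms-of-gross} is circular here: the paper's proof of that lemma (Appendix~\ref{appendix:proof-zak-gross}) explicitly invokes the present result to justify $\hat P_{\mathbf 0}\hat A_{\mathbf 0}\hat P_{\mathbf 0}=\hat A_{\mathbf 0}$, so it cannot serve even as an independent consistency check.
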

\begin{proof}
    The phase point operator defined in Eq.~(4), i.e.,
    \begin{align}
    \hat A_{\boldsymbol\eta}=&\frac{1}{(2\pi)^n}\sum_{\mathbf a\in\mathbb Z^{2n}} e^{i\ell[\mathbf a,{\boldsymbol\eta}]+i\pi \mathbf{a_X}^T\mathbf{a_Z}}\hat T_{\mathbf a},
    \end{align}
    at ${\boldsymbol\eta}=\mathbf 0$ is given by
    \begin{align}
    \hat A_{\mathbf 0}=&\frac{1}{(2\pi)^n}\sum_{\mathbf a\in\mathbb Z^{2n}} e^{i\pi \mathbf{a_X}^T\mathbf{a_Z}}\hat T_{\mathbf a}.
    \end{align}
   
    We compute its action at $\boldsymbol{\eta}=\mathbf 0$ on $\ket{\mathbf x}$
    \begin{align}
    A_{\mathbf 0} \ket{\mathbf x} =&\frac{1}{(2\pi)^n}\sum_{\mathbf a\in\mathbb Z^{2n}} e^{i\pi \mathbf{a_X}^T\mathbf{a_Z}}\hat T_{\mathbf a}\ket{\mathbf{x}}  \nonumber \\
    =&\frac{1}{(2\pi)^n}\sum_{\mathbf a\in\mathbb Z^{2n}} e^{i\pi \mathbf{a_X}^T\mathbf{a_Z}}e^{{i\mathbf{a_{X}}^{T}\mathbf{a_{Z}}\ell^{2}/2}}\hat T_{\mathbf{a_{X}}}\hat T_{\mathbf{a_{Z}}}\ket{\mathbf{x}}\nonumber  \\
    =&\frac{1}{(2\pi)^n}\sum_{\mathbf a\in\mathbb Z^{2n}} e^{\pi i(d+1)\mathbf{a_{Z}}^{T}\mathbf{a_{X}}/d}e^{i\ell \mathbf{a_{Z}}^{T}\mathbf{x}}\ket{\mathbf{x}+\ell \mathbf{a_{X}}},
    \end{align}
where we have used Eq.~(2) in the second line and that $\hat T_{\mathbf{a_Z}}\ket{\mathbf x}=e^{i\ell \mathbf{a_Z}^T\mathbf x}\ket{\mathbf x}$ and that $\ell^2/2=\pi/d$ in the third line.

Now, we rewrite the integer vector $\mathbf a=\mathbf u+d\mathbf t$ in terms of a part that is a multiple of $d$ and a remainder part where $\mathbf u\in\mathbb Z_d^{2n}$ and $\mathbf t\in\mathbb Z^{2n}$. We also use the definition of $\omega=e^{2\pi i/d}$ such that
\begin{align}
\hat A_{\mathbf 0} \ket{\mathbf x}=&\frac{1}{(2\pi)^n}\sum_{\mathbf u\in\mathbb Z^{2n}_{d}}\sum_{\mathbf t\in\mathbb Z^{2n}} \omega^{\frac{(d+1)}{2}(\mathbf u+d\mathbf t)_{Z}^{T}(\mathbf u+d\mathbf t)_{X}}e^{i\ell (\mathbf u+d\mathbf t)_{Z}^{T}\mathbf{x}}\ket{\mathbf{x}+\ell (\mathbf u+d\mathbf t)_{X}} \nonumber \\
=&\frac{1}{(2\pi)^n}\sum_{\mathbf u\in\mathbb Z^{2n}_{d}}\sum_{\mathbf t\in\mathbb Z^{2n}} \omega^{2^{-1}(\mathbf u+d\mathbf t)_{Z}^{T}(\mathbf u+d\mathbf t)_{X}}e^{i\ell (\mathbf u+d\mathbf t)_{Z}^{T}\mathbf{x}}\ket{\mathbf{x}+\ell (\mathbf u+d\mathbf t)_{X}} \nonumber \\
=&\frac{1}{(2\pi)^n}\sum_{\mathbf u\in\mathbb Z^{2n}_{d}}\sum_{\mathbf t\in\mathbb Z^{2n}} \omega^{2^{-1}\mathbf u_{Z}^{T}\mathbf u_{X}}e^{i\ell \mathbf u_{Z}^{T}\mathbf{x}}e^{i\ell d\mathbf t_{Z}^{T}\mathbf{x}}\ket{\mathbf{x}+\ell (\mathbf u+d\mathbf t)_{X}} \nonumber \\
=&\frac{1}{(2\pi)^n}\sum_{\mathbf u\in\mathbb Z^{2n}_{d}}\sum_{\mathbf t\in\mathbb Z^{2n}} \omega^{2^{-1}\mathbf u_{Z}^{T}\mathbf u_{X}}e^{i\ell \mathbf u_{Z}^{T}\mathbf{x}}e^{2\pi i \mathbf t_{Z}^{T}\mathbf{x}/\ell}\ket{\mathbf{x}+\ell (\mathbf u+d\mathbf t)_{X}} \nonumber \\
=&\frac{\ell^n}{(2\pi)^n}\sum_{\mathbf u\in\mathbb Z^{2n}_{d}}\sum_{\mathbf {t}\in\mathbb Z^{2n}} \omega^{2^{-1}\mathbf u_{Z}^{T}\mathbf u_{X}}e^{i\ell \mathbf u_{Z}^{T}\mathbf{x}}\delta(\mathbf x-\ell \mathbf{t_Z})\ket{\mathbf{x}+\ell (\mathbf u+d\mathbf t)_{X}},
\end{align}
whereby in the third line we have used that $\omega^{(d+1)da/2}=1$ for any choice of $a\in\mathbb Z$.
Hence, the expression is only non-zero when $\mathbf{x}=\ell \mathbf{t_{Z}}$ . 
Hence, given that $\mathbf{u_Z}^T\mathbf{u_X}$ and $\mathbf{u_Z}^T\mathbf{t_Z}$ are integers we have
\begin{align}
\hat A_{\mathbf 0} \ket{\ell \mathbf{t_{Z}}}
=&\frac{1}{(\ell d)^n}\sum_{\mathbf u\in\mathbb Z^{2n}_{d}}\sum_{\mathbf {t}\in\mathbb Z^{2n}}\omega^{\mathbf u_{Z}^{T}(2^{-1}\mathbf u_{X}+\mathbf{t_Z})}\ket{\ell \mathbf{t_Z}+\ell (\mathbf u+d\mathbf t)_{X}}\nonumber \\
=&\frac{1}{(\ell)^n}\sum_{\mathbf{u_X}\in\mathbb Z^{n}_{d}}\sum_{\mathbf{t}\in\mathbb Z^{2n}}\delta_{\mathbf{u_{X}},-2\mathbf{t_{Z}}}\ket{\ell \mathbf{t_{Z}}+\ell (\mathbf u+d\mathbf t)_{X}}\nonumber \\
=&\frac{1}{\ell^{n}}\sum_{\mathbf{t_{X}},\mathbf{t_{Z}}\in\mathbb Z^{n}}\ket{\ell (d\mathbf{t_{X}}-\mathbf{t_{Z}})}.
\end{align}
This is the same as the action of the logical parity operator on $\ket{\mathbf x}$, which can be seen by inspecting
\begin{align}
    \hat\Pi_d^L\ket{\mathbf x}=&\left(\sum_{j=0}^{d-1}\ket{-j_{\text{GKP}}}\bra{j_{\text{GKP}}}\right)^{\otimes n}\ket{\mathbf x}\nonumber \\
    =&\bigotimes_{k=1}^n \sum_{j=0}^{d-1}\ket{-j_{\text{GKP}}}\bra{j_{\text{GKP}}} \ket{ x_k}\nonumber \\
    =&\bigotimes_{k=1}^n \sum_{j=0}^{d-1}\sum_{n,n'}\ket{(-j+dn')\ell}\bra{(j+dn)\ell} \ket{ x_k}\nonumber \\
    =&\bigotimes_{k=1}^n \sum_{j=0}^{d-1}\sum_{n,n'}\ket{(-j+dn')\ell} \delta(x_k-(j+dn)\ell).
\end{align}
Hence, this expression is only non-zero when $x_k=\ell t_k$ for some choice of $t_k\in\mathbb Z$.
We therefore have
\begin{align}
   \hat\Pi_d^L\ket{\mathbf x}=&\bigotimes_{k=1}^n \sum_{j=0}^{d-1}\sum_{n,n'}\sum_{t_k}\ket{(-j+dn')\ell} \delta(\ell t_k-(j+dn)\ell)\delta(x_k-\ell t_k)\nonumber \\
   =&\bigotimes_{k=1}^n \sum_{n,n'}\sum_{t_k}\ket{(-t_k+dn+dn')\ell}\delta(x_k-\ell t_k).
\end{align}
Therefore, if we limit to the  non-zero case where $\mathbf x=\ell \mathbf{t_Z}$ and we substitute the vector $\mathbf n+\mathbf n'$ with $\mathbf{t_X}$ we find
\begin{align}
    \hat\Pi_d^L\ket{\mathbf \ell \mathbf{t_Z}}=&\sum_{\mathbf {t_X}\in\mathbb Z^n}\bigotimes_{k=1}^n \ket{(-(\mathbf{t_Z})_k+d(\mathbf{t_X})_k)\ell}\nonumber \\
    =&\sum_{\mathbf{t_X}\in\mathbb Z^n}\ket{(-\mathbf{t_Z}+d\mathbf {t_X})\ell}.
\end{align}
Hence $\hat A_{\mathbf 0}=\hat\Pi_d^{L}$.
\end{proof}

\subsection{Proof of Lemma 1}
\label{appendix:proof-zak-gross}
We now prove Lemma 1. We provide an alternative proof to that given in Ref. \cite{davis2024}. First, note that we can write
    \begin{align}
    \hat T_{\boldsymbol{\eta}/\ell}\hat A_{\mathbf 0}\hat T_{-\boldsymbol{\eta}/\ell}
    =&\frac{1}{(2\pi)^n}\sum_{\mathbf a\in\mathbb Z^{2n}} e^{i\pi \mathbf{a_X}^T\mathbf{a_Z}}\hat T_{\boldsymbol{\eta}/\ell}\hat T_{\mathbf a}\hat T_{-\boldsymbol{\eta}/\ell}\nonumber \\
    =&\frac{1}{(2\pi)^n}\sum_{\mathbf a\in\mathbb Z^{2n}} e^{i\pi \mathbf{a_X}^T\mathbf{a_Z}}e^{i[\mathbf a,\boldsymbol{\eta}/\ell]\ell^2}\hat T_{\mathbf a}\hat T_{\boldsymbol{\eta}/\ell}\hat T_{-\boldsymbol{\eta}/\ell}\nonumber \\
    =&\frac{1}{(2\pi)^n}\sum_{\mathbf a\in\mathbb Z^{2n}} e^{i\ell[\mathbf a,{\boldsymbol\eta}]+i\pi \mathbf{a_X}^T\mathbf{a_Z}}\hat T_{\mathbf a}\nonumber \\
    =&\hat A_{\boldsymbol\eta}.
    \end{align}
\begin{proof}
    Note that from Lemma 3, we see that $\hat P_{\mathbf{0}} \hat A_{\mathbf{0}} \hat P_{\mathbf{0}}=\hat A_{\mathbf{0}}$, where $\hat P_{\mathbf 0}$ is the logical GKP projection operator defined in Eq.~(\ref{eq:projector-gkp}). Also, note that ${\boldsymbol\eta}\in \mathbb T^{2n}_{d}=[0,\ell d)^{\times 2n}$. We write ${\boldsymbol\eta}=\ell (\mathbf u+ \mathbf t)$ where $\mathbf u=\tfrac{1}{\ell}{\boldsymbol\eta} \mod 1\in \mathbb T^{2n}=[0,1)^{\times 2n}$ and $\mathbf t=\tfrac{1}{\ell}{\boldsymbol\eta}-\mathbf u\in\mathbb Z_{d}^{2n}$. Hence, we have that
    \begin{align}
    W_{\hat \rho}({\boldsymbol\eta})=&\mathrm{Tr}(\hat T_{{\boldsymbol\eta}/\ell}\hat A_{\mathbf{0}}\hat T_{-{\boldsymbol\eta/\ell}}\hat\rho)\nonumber  \\
    =&\mathrm{Tr}(\hat{T}_{\mathbf u+ \mathbf t}\hat A_{\mathbf{0}}\hat{T}_{-(\mathbf u+ \mathbf t)}\hat\rho)\nonumber \\
    =&\mathrm{Tr}(\hat{T}_{\mathbf u}\hat{T}_{ \mathbf t}\hat A_{\mathbf{0}}\hat{T}_{- \mathbf t}\hat{T}_{-\mathbf u}\hat\rho)\nonumber \\
     =&\mathrm{Tr}(\hat{T}_{\mathbf u}\hat{T}_{ \mathbf t}\hat P_{\mathbf{0}}\hat A_{\mathbf{0}}\hat P_{\mathbf{0}}\hat{T}_{- \mathbf t}\hat{T}_{-\mathbf u}\hat\rho).
    \end{align}
    Next, we use that $[\hat P_{\mathbf 0},\hat T_{\mathbf t}]=0$ for $\mathbf t\in\mathbb Z^n$. Hence,
      \begin{align}
      \label{eq:zak-gross-vs-gross}
    W_{\hat \rho}({\boldsymbol\eta})
    =&\mathrm{Tr}(\hat{T}_{ \mathbf t}\hat A_{\mathbf{0}}\hat{T}_{- \mathbf t}\hat P_{\mathbf{0}}\hat{T}_{-\mathbf u}\hat\rho \hat{T}_{\mathbf u}\hat P_{\mathbf{0}})\nonumber \\
    =&\mathrm{Tr}(\hat{T}_{\mathbf t}\hat A_{\mathbf{0}}\hat{T}_{-\mathbf t}\hat\rho_L(\mathbf u))\nonumber \\
    =&\mathrm{Tr}(\hat A_{\mathbf{t}}\hat\rho_L(\mathbf u))\nonumber \\
    =&\bar{W}_{\hat{\bar{\rho}}(\mathbf u)}(\mathbf t),
    \end{align}
    where $\hat{\bar{\rho}}(\mathbf u)$ is the qudit state that is perfectly encoded by the logical GKP state $\hat{\rho}_L(\mathbf u)$.
\end{proof}
\subsection{Connection to stabilizer subsystem decomposition}
\label{appendix:ssd}
Note that the subsystem decomposition can be defined for qudits as \cite{shaw2024, calcluth2024}
\begin{align}
    \hat\rho_L=\int_R \dd \mathbf u \hat \rho_L(\mathbf u),
\end{align}
over the region $R=[-\frac 1 2,\frac 1 2)^{\times 2}$. Hence, we see that the Gross Wigner function of the logical qudit subsystem decomposition state $\hat\rho_L$ derived from a general CV state $\hat\rho$ can be evaluated via the ZGW function as 
\begin{align}
    \bar W_{\hat\rho_L}(\mathbf t)=&\int_R \dd \mathbf u \bar{W}_{\hat\rho_{L}(\mathbf u)}(\mathbf t)\nonumber \\
    =&\int_R \dd \mathbf u {W}_{\hat\rho}(\mathbf u+\mathbf t),
\end{align}
where  ${W}_{\hat\rho}(\mathbf u+\mathbf t)$ is the ZGW function of the state $\hat \rho$, and where we have made use of (\ref{eq:zak-gross-vs-gross}).

\subsection{Proof of Corollary 1}
\label{appendix:logical-state-gross}
In this subsection, we prove Corollary 1 using Lemma 1.
\begin{proof}
    First, note that for a single mode, we have
    \begin{align}
        \hat P_{\mathbf 0}\hat T_{\mathbf u}\hat P_{\mathbf 0}=&\sum_{j,j'} \ket{j_{\text{GKP}}}\bra{j_{\text{GKP}}} e^{i\ell^2 u_Xu_Z/2}\hat T_{u_X}\hat{T}_{u_Z}\ket{j'_{\text{GKP}}}\bra{j'_{\text{GKP}}} \nonumber \\
        =&\sum_{j,j'} \ket{j_{\text{GKP}}}\bra{j_{\text{GKP}}} e^{i\ell^2 u_Xu_Z/2}e^{-i\hat p \ell u_X}e^{i\ell \hat q u_Z}\ket{j'_{\text{GKP}}}\bra{j'_{\text{GKP}}} \nonumber \\
        =&\sum_{m,m',k,k'}\sum_{j,j'} \ket{\ell(dk+j)}\bra{\ell (dk'+j)}  e^{i\ell^2 u_Xu_Z/2}e^{-i\hat p \ell u_X}e^{i\ell \hat q u_Z}\ket{\ell(dm+j')}\bra{\ell (dm'+j')} \nonumber \\
        =&\sum_{m,m',k,k'}\sum_{j,j'} e^{i\ell^2 u_Xu_Z/2}e^{i\ell^2 (dm+j')u_Z}\ket{\ell(dk+j)}\bra{\ell (dk'+j)}  \ket{\ell(dm+j'+u_X)}\bra{\ell (dm'+j')}.
    \end{align}
    Now, consider that the indices in the sum over $j,j'$ can take values $0,\dots,d-1$ while the integers $dk',dm$ are multiples of $d$ and $u_X\in[0,1)$. Therefore, the bra-ket in the last line will therefore only be non-zero if $u_X=0$ and $j=j'$ and $m=k'$. We can express this as
    \begin{align}
        \hat P_{\mathbf 0}\hat T_{\mathbf u}\hat P_{\mathbf 0}
        =&\sum_{m,m',k,k'}\sum_{j,j'} e^{i\ell^2 u_Xu_Z/2}e^{i\ell^2 (dm+j')u_Z}\delta_{m,k'}\delta_{j,j'}\delta(\ell u_X)\ket{\ell(dk+j)}\bra{\ell (dm'+j')} \nonumber  \\
        =&\sum_{m,m',k}\sum_{j} e^{i\ell^2 (dm+j)u_Z}\delta(\ell u_X)\ket{\ell(dk+j)}\bra{\ell (dm'+j)}\nonumber   \\
        =&\sum_{m,m',k}\sum_{j} e^{i\ell^2 ju_Z}\delta(u_Z-m)\delta(\ell u_X)\ket{\ell(dk+j)}\bra{\ell (dm'+j)}.
    \end{align}
    However, note that $\mathbf u\in [0,1)^{\times 2}$ and so we must have $\mathbf u=\mathbf 0$. Therefore,
    \begin{align}
        \hat P_{\mathbf 0}\hat T_{\mathbf u}\hat P_{\mathbf 0}
        =&\sum_{m,m',k}\sum_{j} e^{i\ell^2 ju_Z}\delta(u_Z-m)\delta(\ell u_X)\ket{\ell(dk+j)}\bra{\ell (dm'+j)}\nonumber \\
        =&\frac{1}{\ell}\sum_{m',k}\sum_{j}\delta(\mathbf u)\ket{\ell(dk+j)}\bra{\ell (dm'+j)}\nonumber \\
        =&\frac{1}{\ell}\delta(\mathbf u) \hat P_{\mathbf 0}.
    \end{align}
    
    Given a perfectly encoded GKP state $\hat\rho=\sum_{j,k}\hat{\bar{\rho}}_{j,k}\ket{j_{\text{GKP}}}\bra{k_{\text{GKP}}}$, such that $\hat \rho=\hat P_{\mathbf 0} \hat\rho\hat P_{\mathbf 0} $, logically encoding a DV state $\hat{\bar\rho}$, we can consider the result of error correction as \cite{baragiola2019}
   \begin{align}
       \hat\rho_L(\mathbf u)=&\hat P_{\mathbf 0} \hat T_{-\mathbf u}P_{\mathbf 0}\hat\rho P_{\mathbf 0}\hat T_{\mathbf u} \hat P_{\mathbf 0}\nonumber \\
       =&\frac{1}{\ell^2}\delta^2(\mathbf u)\hat\rho\nonumber \\
       =&\frac{d}{2\pi}\delta^2(\mathbf u)\hat\rho_L(\mathbf 0).
   \end{align}
   where we see that $\hat\rho_L(\mathbf 0)=\hat \rho$.
   The ZGW function is given by
   \begin{align}
       W_{\hat \rho}(\boldsymbol{\eta})=&\bar{W}_{\hat {\bar\rho}_L(\mathbf u)}(\mathbf t)\\
       =& \mathrm{Tr}(\hat A_{\mathbf{t}}\hat\rho_L(\mathbf u))\\
       =& \frac{d}{2\pi}\delta^2(\mathbf u)\mathrm{Tr}(\hat A_{\mathbf{t}}\hat\rho_L(\mathbf 0))\\
       =&\frac{d}{2\pi}\delta^2(\mathbf u)\bar{W}_{\hat{\bar \rho}}(\mathbf t)\\
       =&\frac{d}{2\pi}\sum_{\mathbf m \in \mathbb Z_d^2}\delta^2(\boldsymbol{\eta}-\ell \mathbf m)\bar{W}_{\hat{\bar \rho}}(\mathbf m),
   \end{align}
   where in the last line we use that $\mathbf u=\mathbf 0$ and therefore $\boldsymbol{\eta}=\ell \mathbf t$ must have elements that are integer multiples of $\ell$.
   
Note that this Wigner function is unnormalizable, but this is to be expected because the state $\hat\rho_0$ is also unnormalizable.
We can simplify the expression as
\begin{align}
W_{\hat \rho}(\boldsymbol{\eta})\propto&\sum_{\mathbf m \in \mathbb Z_d^2}\delta(\boldsymbol{\eta}-\ell \mathbf m)\bar{W}_{\hat{\bar \rho}}(\mathbf m).
\end{align}

\end{proof}
Hence, for a logical zero state $\hat\rho_0$ we have, using Eq.~(\ref{eq:comp-basis-state}), that
\begin{align}
    W_{\hat\rho_0}(\boldsymbol{\eta})\propto&\sum_{\mathbf m\in \mathbb Z^2_d}\delta(\boldsymbol{\eta}-\ell \mathbf m)\bar W_{\ket 0 \bra 0}(\mathbf m)\nonumber \\
    =&\sum_{\mathbf m\in \mathbb Z^2_d}\delta(\boldsymbol{\eta}-\ell \mathbf m)\delta_{m_X,0}\nonumber \\
    =&\sum_{m\in \mathbb Z_d}\delta(\eta_Z-\ell m)\delta(\eta_X).
\end{align}

\section{ZGW function evolution}
\subsection{Decomposition of operations}
\label{appendix:decomposition}
In the following section, we demonstrate that the set of encoded Clifford operations generates the set of integer symplectic matrices. This implies that any integer symplectic matrix that we consider in our paper can be decomposed as a set of encoded Clifford operations.

First, we define a generating set of Clifford operations. In the set, we must have the ability to implement any Pauli operation and the ability to transform the qudit stabilizer tableau with an integer symplectic matrix defined over $\mathbb Z_d$ \cite{hostens2005}. The Clifford group is generated by the SUM gate (equivalently the control-Z gate as it can be conjugated by a Fourier transform on the target qudit), Fourier transform, phase gate and $\hat Z$ gate. 

Now, we consider the set of integer symplectic matrices $S \in \text{Sp}(2n,\mathbb Z)$ with the aim to show that these are all generated by the qudit Clifford group, and we make use of the following Lemma.
\begin{lemma}
    (Theorem 2 of Ref.~\cite{hua1949}.) The integer symplectic group $\text{Sp}(2n,\mathbb Z)$ is generated by matrices
    \begin{align}
        \label{eq:generators-big}
        \begin{pmatrix}\mathbbm 1& S\\0&\mathbbm 1\end{pmatrix}, \quad \begin{pmatrix}A&0\\0&A^{-T}\end{pmatrix}, \quad \begin{pmatrix}X&Y\\-Y&X\end{pmatrix}
    \end{align}
    where $S\in\text{Sym}(n,\mathbb Z)$ in the set of symmetric integer matrices, $A$ is unimodular, i.e., $\det A=\pm 1$, and $X$ is a diagonal matrix with entries zero or one while $Y=\mathbbm 1-X$.
\end{lemma}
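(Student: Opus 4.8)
The plan is to write $\Gamma\subseteq\text{Sp}(2n,\mathbb Z)$ for the subgroup generated by the three families of matrices above and to show $\Gamma=\text{Sp}(2n,\mathbb Z)$ by a normal-form argument. Write a general element in block form $M=\left(\begin{smallmatrix}A&B\\C&D\end{smallmatrix}\right)$. The easy half is that the ``Siegel parabolic'' $P=\{M:C=0\}$ already lies in $\Gamma$: when $C=0$ the symplectic relations force $D=A^{-T}$ and $A^{-1}B=(A^{-1}B)^T$, so $\left(\begin{smallmatrix}A&B\\0&A^{-T}\end{smallmatrix}\right)=\left(\begin{smallmatrix}A&0\\0&A^{-T}\end{smallmatrix}\right)\left(\begin{smallmatrix}\mathbbm 1&A^{-1}B\\0&\mathbbm 1\end{smallmatrix}\right)$ is a product of a second-family and a first-family generator. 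Now $P$ is exactly the stabiliser in $\text{Sp}(2n,\mathbb Z)$ of the standard Lagrangian $L_1=\mathbb Z^n\times\{0\}$, and the first $n$ columns of any $M$ span a \emph{primitive} Lagrangian of $(\mathbb Z^{2n},\Omega)$ (pairwise symplectically orthogonal since $(Me_i)^T\Omega(Me_j)=\Omega_{ij}=0$ for $i,j\le n$; primitive because the columns of $M$ form a $\mathbb Z$-basis of $\mathbb Z^{2n}$). Hence it suffices to prove that $\Gamma$ acts transitively on primitive Lagrangians: then some $\gamma\in\Gamma$ carries the column span of $M$ to $L_1$, which forces the $C$-block of $\gamma M$ to vanish, so $\gamma M\in P\subseteq\Gamma$ and therefore $M\in\Gamma$.

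The transitivity rests on a lemma about vectors: $\Gamma$ acts transitively on primitive vectors of $\mathbb Z^{2n}$, and this is where the Euclidean algorithm enters. Given primitive $w=(w_q,w_p)$, first apply a second-family generator (which acts as $w_q\mapsto Aw_q$) to collapse $w_q$ to $(\gcd(w_q),0,\dots,0)^T$; then, using the mode-$1$ partial Fourier (the $X=\mathbbm 1-e_{11}$ case of the third family, swapping $q_1\leftrightarrow\pm p_1$) alternated with shears $S=s\,e_{11}$ (symmetric, first family), run the ordinary $\text{SL}(2,\mathbb Z)$ Euclidean algorithm on the pair $\big((w_q)_1,(w_p)_1\big)$ until $(w_p)_1=0$; then swap $q_j\leftrightarrow p_j$ for $j\ge2$ (third family again) to move the remaining $p$-entries into the $q$-block; finally apply one more second-family generator to collapse the whole $q$-block, which by primitivity now has content equal to $\gcd$ of \emph{all} $2n$ original entries, i.e.\ $1$, landing on $e_1$. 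Every move is one of the three listed generators.

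With the vector lemma in hand, transitivity on primitive Lagrangians follows by induction on $n$. The base case $n=1$ is immediate: primitive Lagrangians are primitive lines, the vector lemma is exactly transitivity on those, and the stabiliser of $\langle e_1\rangle$ in $\text{SL}(2,\mathbb Z)$ is generated by $\left(\begin{smallmatrix}1&1\\0&1\end{smallmatrix}\right)$ and $\left(\begin{smallmatrix}-1&0\\0&-1\end{smallmatrix}\right)$, both on the list, so $\Gamma=\text{SL}(2,\mathbb Z)$. For the step, take a primitive Lagrangian $L$, pick a vector from a $\mathbb Z$-basis of $L$ (automatically primitive in $\mathbb Z^{2n}$), and move it to $e_1$ by the vector lemma; then $L\subseteq e_1^{\perp}$, and $e_1^{\perp}/\langle e_1\rangle$ is a symplectic $\mathbb Z^{2(n-1)}$ on modes $2,\dots,n$ in which $L/\langle e_1\rangle$ is a primitive Lagrangian. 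The crucial point is that the three generator families of $\text{Sp}(2(n-1),\mathbb Z)$, embedded as operations on modes $2,\dots,n$ that fix mode $1$, are again generators of $\text{Sp}(2n,\mathbb Z)$ of the same three types (a symmetric block pads to a symmetric block; a unimodular matrix extends by a $1$ to a unimodular one; a $0/1$ diagonal extends by a $1$ in slot $1$). So the induction hypothesis places this copy of $\text{Sp}(2(n-1),\mathbb Z)$ inside $\Gamma$; using it to send $L/\langle e_1\rangle$ to the standard Lagrangian of modes $2,\dots,n$ sends $L$ itself to $L_1$. Combined with $P\subseteq\Gamma$ this yields $M\in\Gamma$ for all $M$.

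The part that needs genuine care — the ``hard part'' — is precisely the compatibility of the \emph{restricted} generating set with this reduction: one must verify that only atomic moves are ever invoked, namely swapping a single conjugate pair $q_i\leftrightarrow p_i$, the shear $q_i\mapsto q_i+s\,p_i$, and integer changes of coordinates on the $q$-block, since these are exactly the special cases $X=\mathbbm 1-e_{ii}$, $S=s\,e_{ii}$, and arbitrary $A$ of the three families — so the symmetry constraint on $S$ and the $0/1$-diagonal constraint on $X$ are never obstructions — and that the induction genuinely embeds the rank-$(n-1)$ generating set into the rank-$n$ one. Everything else is routine linear algebra over $\mathbb Z$.
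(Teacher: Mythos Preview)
The paper does not prove this lemma at all: it is quoted as Theorem~2 of Hua~(1949) and then used as a black box to establish the subsequent Lemma about the Fourier, SUM, and phase generators. So there is no in-paper argument to compare your proposal against.

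Your argument is a correct and essentially standard proof of Hua's theorem. The scheme---show the Siegel parabolic $P=\{C=0\}$ lies in $\Gamma$, then prove $\Gamma$ acts transitively on primitive Lagrangians by induction on $n$, with the base step furnished by a Euclidean-algorithm transitivity lemma on primitive vectors---is the classical route. Two small bookkeeping points worth tightening in a full write-up: (i) the first second-family move also sends $w_p\mapsto A^{-T}w_p$, so the ``content equals $\gcd$ of the original entries'' claim is really ``content is preserved under every move in $\Gamma$ since all are unimodular on $\mathbb Z^{2n}$''---which is what you need and is true; (ii) in the induction step one should record that $L/\langle e_1\rangle$ is a \emph{direct summand} of $e_1^{\perp}/\langle e_1\rangle$, which follows because $\mathbb Z^{2n}/L$ is free (as $L$ is primitive) and hence so is its subgroup $e_1^{\perp}/L$. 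With those two sentences added the argument is complete.
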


Note that
\begin{align}
    \begin{pmatrix}\mathbbm 1&S\\0&\mathbbm 1\end{pmatrix}\begin{pmatrix}\mathbbm 1&S'\\0&\mathbbm 1\end{pmatrix}=\begin{pmatrix}\mathbbm 1&S+S'\\0&\mathbbm 1\end{pmatrix},
\end{align}
hence, we can build the matrix with symmetric block $S$ using a summation of generators of the form 
\begin{align}
    S_1=\begin{pmatrix}1&0&\dots&0\\0&0&\dots&0\\
    \vdots&\vdots&\vdots &\vdots\\
    0&0&\dots&0\end{pmatrix}, \quad S_2=\begin{pmatrix}0&1&0&\dots&0\\1&0&0&\dots&0\\
    0&0&0&\dots&0\\
    \vdots&\vdots&\vdots&\vdots &\vdots\\
    0&0&0&\dots&0\end{pmatrix}
\end{align}
and any matrix found by interchanging rows and corresponding columns of $S_1,S_2$.

The block diagonal matrix consisting of unimodular blocks has block $A$ generated by elements of the form
\begin{align}
\label{eq:swap-and-sum}
    A_1=\begin{pmatrix}0&0&\dots&0&1\\1&0&\dots&0&0\\0&\ddots&0&0&0\\
    \vdots&\vdots&\ddots&\vdots&\vdots\\
    0&0&0&1&0\end{pmatrix}, \quad A_2 = \begin{pmatrix}1&1&0&\dots&0\\
    0&1&0&\dots&0\\
    0&0&\ddots&0&0\\
    \vdots&\vdots&\vdots&\ddots&\vdots\\
    0&0&0&0&1\end{pmatrix}
\end{align}
along with matrices of the form
\begin{align}
    A_3=\begin{pmatrix}X&Y\\-Y&X\end{pmatrix}^2.
\end{align}
Finally, note that the third matrix in Eq.~(\ref{eq:generators-big}) with diagonal blocks can be generated using combinations of the Fourier transform matrices. 

This allows us to prove the following Lemma. 
\begin{lemma}
    The integer symplectic group $\text{Sp}(2n,\mathbb Z)$ is generated by the symplectic matrices representing the Fourier transform, the SUM gate and the phase gate.
\end{lemma}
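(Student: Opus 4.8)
The plan is to realize every elementary building block that appears in the reduction of Hua's generators --- the symmetric shears with block $S_1$ or $S_2$, and more generally $\mqty(\mathbbm 1&S\\0&\mathbbm 1)$ for an arbitrary symmetric integer matrix $S$; the unimodular generators $A_1,A_2,A_3$ of the block-diagonal family; and the third-type generator $\mqty(X&Y\\-Y&X)$ --- as a product of the symplectic matrices of the Fourier transform, the SUM gate, and the phase gate. Together with the preceding Lemma (Hua's theorem) and the reductions already recorded above, this yields $\text{Sp}(2n,\mathbb Z)\subseteq\langle F,E_{\mathrm{SUM}},P\rangle$, while the reverse inclusion is immediate since $F$, $E_{\mathrm{SUM}}$ and $P$ are themselves integer symplectic matrices. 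The Pauli $\hat Z$ of the original generating set is dropped here because it acts trivially on the symplectic matrix, contributing only to the displacement part.

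First I would write down the symplectic matrix of each Clifford gate in the ordering $\hat{\mathbf r}=(\hat q_1,\dots,\hat q_n,\hat p_1,\dots,\hat p_n)$. The phase gate on mode $k$ sends $\hat p_k\mapsto\hat p_k+\hat q_k$, so its symplectic matrix is $\mathbbm 1$ plus a single matrix unit $e_{kk}$ in the lower-left block; conjugating by the single-mode Fourier matrix $\hat F_k$, which swaps the two blocks up to sign, moves $e_{kk}$ into the upper-right block, and taking powers and products over $k$ of such factors builds every $\mqty(\mathbbm 1&S\\0&\mathbbm 1)$ with diagonal $S$. The controlled-$Z$ gate on modes $i,j$ --- obtained from SUM by conjugating the target with a Fourier transform, as noted above --- contributes the symmetric off-diagonal block $e_{ij}+e_{ji}$ after the same Fourier conjugation, so phase and controlled-$Z$ together generate $\mqty(\mathbbm 1&S\\0&\mathbbm 1)$ for all symmetric integer $S$ with no coordinate permutation required. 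The single-mode Fourier $\hat F_k$ is itself a third-type generator (with $X=\mathbbm 1-e_{kk}$, $Y=e_{kk}$), products of the $\hat F_k$ over a subset of modes give the general third-type generator, and $\hat F_k^{2}=\mathrm{diag}(A,A^{-T})$ with $A=\mathbbm 1-2e_{kk}$ supplies a diagonal sign flip.

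Next I would handle the block-diagonal unimodular family. The SUM gate from mode $i$ to mode $j$ is $\mathrm{diag}(A,A^{-T})$ with $A=\mathbbm 1+e_{ji}$ an elementary transvection, and its powers give $\mathbbm 1+c\,e_{ji}$ for every $c\in\mathbb Z$. Since $\mathrm{GL}(n,\mathbb Z)$ is generated by elementary transvections together with a single diagonal sign flip --- the latter supplied by $\hat F_k^{2}$ --- every $\mathrm{diag}(A,A^{-T})$ is a product of SUM and Fourier matrices; in particular the cyclic permutation $A_1$, and all mode permutations $\mathrm{diag}(\Pi,\Pi)$, are obtained this way, which also realizes the SWAP gate as a product of three SUM-type factors. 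Assembling the three families via Hua's theorem then completes the argument.

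The main point requiring care, rather than a deep difficulty, is the bookkeeping: keeping block positions and signs straight when passing between the $(\hat q,\hat p)$ ordering and the block structure of $\Omega$; verifying that it is the specific integer representatives of the Clifford gates, not merely their reductions mod $d$, that match Hua's generators; and checking that the reductions expressing a general symmetric $S$, a general unimodular $A$, and a general third-type block in terms of the elementary pieces hold over $\mathbb Z$ and not just over $\mathbb Z_d$. One should also note that all inverses used (for instance $\hat P^{-1}$ and $\mathrm{SUM}^{-1}$) lie automatically in the generated group.
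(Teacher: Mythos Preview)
Your proposal is correct and follows essentially the same route as the paper: invoke Hua's theorem, then realize each of the three generator families as words in the Fourier, SUM, and phase symplectic matrices. The only cosmetic differences are that you conjugate the phase gate by a Fourier to place the shear in the upper-right block (the paper asserts this directly), and you appeal to the standard fact that elementary transvections together with a sign flip generate $\mathrm{GL}(n,\mathbb Z)$, whereas the paper builds the specific generators $A_1,A_2$ of Hua's unimodular family more explicitly from SUM and Fourier row-swaps.
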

\begin{proof}
    We assume that we have access to both the SUM gate $e^{-i\hat q_1\hat p_2}$, which has a symplectic matrix which takes the form of
    \begin{align}
        \begin{pmatrix}A_2&0\\0&A_2^{-T}\end{pmatrix},
    \end{align} 
    along with $e^{-i\hat q_i\hat p_j}$ with any choice of $i, j \in \{1,\dots,n\}$ and $i\ne j$. This means that we can generate symplectic matrices with the top-left block $A_2$, defined in Eq.~(\ref{eq:swap-and-sum}), as well as matrices like $A_2$, where its rows and columns have been exchanged. This allows us to generate, for example, the symplectic matrix with block $A_2^T$. Furthermore, we also have access to the Fourier transform, which unlocks access to matrices of the form
    \begin{align}
        \begin{pmatrix}X&Y\\-Y&X\end{pmatrix}
    \end{align}
    and thereby also 
    \begin{align}
        \label{eq:FTsquared}
        \begin{pmatrix}X&Y\\-Y&X\end{pmatrix}^2= \begin{pmatrix}X^2-Y^2&0\\0&X^2-Y^2\end{pmatrix}
    \end{align}
    where $X^2-Y^2$ is a diagonal matrix with elements $\pm 1$ along the diagonal.
    Together, these matrices generate the full set of matrices of the form
    \begin{align}
        \begin{pmatrix}A&0\\0&A^{-T}\end{pmatrix}.
    \end{align}
    To see why, note that we have $A_2$ because of the SUM gate, and we can generate the matrix $A_1$ as follows.
    Consider the case of a $2\times 2$ block matrix. We are able to produce operations with block matrices of the form of $A_2$ given in Eq.~(\ref{eq:swap-and-sum}) along with the operation described with the block matrix
    \begin{align}
        \begin{pmatrix}1&0\\1&1\end{pmatrix}
    \end{align}
    which corresponds to $e^{-i\hat q_2\hat p_1}$.
    Using the matrix generated by applying two Fourier transforms on the second mode, corresponding to the symplectic matrix given in Eq.~(\ref{eq:FTsquared}), we see that we have access to the symplectic matrix with top-left block
    \begin{align}
        \begin{pmatrix}1&0\\0&-1\end{pmatrix}.
    \end{align}

    Combining the blocks of these symplectic operations, we find that we can build a symplectic matrix which has a top-left block of the form
    \begin{align}
        \begin{pmatrix}1&0\\0&-1\end{pmatrix}\begin{pmatrix}1&1\\0&1\end{pmatrix}\begin{pmatrix}1&0\\0&-1\end{pmatrix}\begin{pmatrix}1&0\\1&1\end{pmatrix}\begin{pmatrix}1&0\\0&-1\end{pmatrix}\begin{pmatrix}1&1\\0&1\end{pmatrix}=\begin{pmatrix}0&1\\1&0\end{pmatrix}
    \end{align}
    which corresponds to $r_1\leftrightarrow r_2$. We then repeat this process with the remaining $r_2 \leftrightarrow r_3, \dots r_{n-1} \leftrightarrow r_n$.

    Finally, we note that the matrix $S$ with block $S_1$ is generated by the phase gate, and the same matrix with block $S_2$ is generated by the control-Z gate. Interchanging the rows and columns of each matrix corresponds to choosing different modes for which the phase gate or control-Z gate should act on.
\end{proof}

\subsection{Operations}
\label{appendix:operations}
In the following, we show that it is possible to perform Gaussian operations with integer symplectic matrices and that it is possible to find $W_{\hat U_S\hat \rho \hat U_S^\dagger}$ from $W_{\hat \rho}$ given a Gaussian unitary $\hat U_S$ represented by an integer symplectic matrix $S$. 

We have 
\begin{align}
    W_{\hat U\hat \rho \hat U^\dagger}(\boldsymbol{\eta})=&\frac{1}{(2\pi)^n} \sum_{\mathbf a\in \mathbb Z^{2n}} \Tr(\hat U\hat \rho \hat U^\dagger e^{i\ell[\mathbf a,{\boldsymbol\eta}]+i\pi \mathbf{a_X}^T\mathbf{a_Z}}\hat T_{\mathbf a}) \nonumber \\
    =&\frac{1}{(2\pi)^n} \sum_{\mathbf a\in \mathbb Z^{2n}} \Tr(\hat\rho e^{i\ell[ \mathbf a, {\boldsymbol\eta}]+i\pi \mathbf{a_X}^T\mathbf{a_Z}}\hat T_{S\mathbf a}), 
\end{align}
where we have used that $\hat U^\dagger \hat D(\mathbf r)\hat U=\hat D(S\mathbf r)$ \cite{serafini2017}.

For certain symplectic matrices, we find that covariance holds because $
(S\mathbf{a})_{X}^{T}(S\mathbf{a})_{Z}
=\mathbf{a_{X}}^{T}\mathbf{a_{Z}}$, which allows us to interpret the symplectic operation as a linear transformation of the coordinates. However, this does not hold for general integer symplectic matrices, and we will now show that $
(S\mathbf{a})_{X}^{T}(S\mathbf{a})_{Z}
=\mathbf{a_{X}}^{T}\mathbf{a_{Z}}+\mathbf{t}^{T}\mathbf{a} \mod 2$, where $\mathbf t$ is a vector derived from the symplectic matrix. Note that we are only interested in the value of the expression modulo two because it is evaluated in the exponential with a factor of $\pi$.

First, note that we can explicitly write 
\begin{align}
S\mathbf{a}=\left( \begin{matrix}
A&B \\
C&D
\end{matrix} \right) \left( \begin{matrix}
\mathbf{a_{X}} \\
\mathbf{a_{Z}}
\end{matrix} \right) =\left( \begin{matrix}
A\mathbf{a_{X}}+B \mathbf{a_{Z}} \\
C\mathbf{a_{X}}+D \mathbf{a_{Z}}
\end{matrix} \right) 
\end{align}
hence 
\begin{align}
    (S\mathbf{a})_{X}^{T}(S\mathbf{a})_{Z}=&(A\mathbf{a_{X}}+B\mathbf{a_{Z}})^{T}(C\mathbf{a_{X}}+D\mathbf{a_{Z}})\nonumber \\
    =&\mathbf{a_{X}}^{T}A^{T}C\mathbf{a_{X}}+\mathbf{a_{Z}}^{T}B^{T}D\mathbf{a_{Z}} +\mathbf{a_{X}}^{T}A^{T}D\mathbf{a_{Z}}+\mathbf{a}_{Z}^{T}B^{T}C\mathbf{a_{X}}.
\end{align}
Next, we use that $A^{T}D-C^{T}B=\mathbbm 1$, which is a property of symplectic matrices, such that
\begin{align}
(S\mathbf{a})_{X}^{T}(S\mathbf{a})_{Z}
=&\mathbf{a_{X}}^{T}A^{T}C\mathbf{a_{X}}+\mathbf{a_{Z}}^{T}B^{T}D\mathbf{a_{Z}} +\mathbf{a_{X}}^{T}(\mathbbm{1}+C^{T}B)\mathbf{a_{Z}}+\mathbf{a}_{Z}^{T}B^{T}C\mathbf{a_{X}}\nonumber \\
=&\mathbf{a_{X}}^{T}A^{T}C\mathbf{a_{X}}+\mathbf{a_{Z}}^{T}B^{T}D\mathbf{a_{Z}} +\mathbf{a_{X}}^{T}\mathbf{a_{Z}}+\mathbf{a_{X}}^{T}C^{T}B\mathbf{a_{Z}}+\mathbf{a}_{Z}^{T}B^{T}C\mathbf{a_{X}}\nonumber \\
=&\mathbf{a_{X}}^{T}A^{T}C\mathbf{a_{X}}+\mathbf{a_{Z}}^{T}B^{T}D\mathbf{a_{Z}} +\mathbf{a_{X}}^{T}\mathbf{a_{Z}}+2\mathbf{a_{X}}^{T}C^{T}B\mathbf{a_{Z}}.
\end{align}
We note that since $A^{T}C$ and $B^{T}D$ are both symmetric according to the properties of symplectic matrices, all off-diagonal terms will appear twice, i.e.,  $a_{j}(A^{T}C)_{jk}a_{k}=a_{k}(A^{T}C)_{kj}a_{j}$  and hence in the summation over all these terms the sum of the off-diagonal terms will be a multiple of two and hence will be zero modulo two. Therefore, we only need to consider the diagonal terms of the matrix
\begin{align}
T=\left( \begin{matrix}
A^{T}C&0 \\
0&B^{T}D
\end{matrix} \right).
\end{align}
We denote the diagonal elements of $T$ using a vector $\bar{\mathbf{t}}$ such that $\bar t_{i}=T_{ii}$. Note also that since $\bar t_i$ and $a_i$ are integers and $a_i^2$ is even if and only if $a_i$ is even, then $\bar t_{i}a_{i}^{2}=\bar t_{i}a_{i} \mod 2$. Hence we have
\begin{align}
(S\mathbf{a})_{X}^{T}(S\mathbf{a})_{Z}
=&\mathbf{a_{X}}^{T}\mathbf{a_{Z}}+\bar{\mathbf{t}}^{T}\mathbf{a} \mod 2.
\end{align}
We can, therefore, evaluate the expression for the Wigner function as
\begin{align}
    W_{\hat U\hat \rho \hat U^\dagger}(\boldsymbol{\eta})=&\frac{1}{(2\pi)^n} \sum_{\mathbf a\in \mathbb Z^{2n}} \Tr(\hat\rho e^{i\ell\mathbf a^T\Omega {\boldsymbol\eta}+i\pi ((S\mathbf{a})_{X}^{T}(S\mathbf{a})_{Z}-\bar{\mathbf t}^T\mathbf a)}\hat T_{S\mathbf a})\nonumber \\
    =&\frac{1}{(2\pi)^n} \sum_{\mathbf a\in \mathbb Z^{2n}} \Tr(\hat\rho e^{i\ell\mathbf a^TS^TS^{-T}\Omega {\boldsymbol\eta}+i\pi ((S\mathbf{a})_{X}^{T}(S\mathbf{a})_{Z}-\bar{\mathbf t}^TS^{-1}S\mathbf a)}\hat T_{S\mathbf a})\nonumber \\
    =&\frac{1}{(2\pi)^n} \sum_{\mathbf a,\mathbf b\in \mathbb Z^{2n}} \delta_{\mathbf b,S\mathbf a}\Tr(\hat\rho e^{i\ell\mathbf b^TS^{-T}\Omega {\boldsymbol\eta}+i\pi (\mathbf{b}_{X}^{T}\mathbf{b}_{Z}-\bar{\mathbf t}^TS^{-1}\mathbf b)}\hat T_{\mathbf b})\nonumber \\
    =&\frac{1}{(2\pi)^n} \sum_{\mathbf a,\mathbf b\in \mathbb Z^{2n}} \delta_{\mathbf b,S\mathbf a}\Tr(\hat\rho e^{i\ell\mathbf b^TS^{-T}\Omega {\boldsymbol\eta}+i\pi (\mathbf{b}_{X}^{T}\mathbf{b}_{Z}-\mathbf b^T\Omega \Omega^{-1}S^{-T}\bar{\mathbf t})}\hat T_{\mathbf b}). 
\end{align}
We then use that $S^{-T}\Omega=\Omega S$ and $\Omega^{-1}S^{-T} = S\Omega^{-1}$ and therefore
\begin{align}
    W_{\hat U\hat \rho \hat U^\dagger}(\boldsymbol{\eta})=&\frac{1}{(2\pi)^n} \sum_{\mathbf a,\mathbf b\in \mathbb Z^{2n}} \delta_{\mathbf b,S\mathbf a}\Tr(\hat\rho e^{i\ell\mathbf b^T\Omega S{\boldsymbol\eta}+i\pi (\mathbf{b}_{X}^{T}\mathbf{b}_{Z}-\mathbf b^T\Omega S\Omega^{-1}\bar{\mathbf t})}\hat T_{\mathbf b})\nonumber \\
    =&\frac{1}{(2\pi)^n} \sum_{\mathbf a,\mathbf b\in \mathbb Z^{2n}} \delta_{\mathbf b,S\mathbf a}\Tr(\hat\rho e^{i\ell[\mathbf b,S{\boldsymbol\eta}-\frac{\pi}{\ell} S\Omega^{-1}\bar{\mathbf t}]+i\pi \mathbf{b}_{X}^{T}\mathbf{b}_{Z}}\hat T_{\mathbf b}). 
\end{align}
Note that we can simply define $\mathbf t=\frac{\pi}{\ell}S\Omega^{-1}\bar{\mathbf t}$ and hence
\begin{align}
    W_{\hat U\hat \rho \hat U^\dagger}(\boldsymbol{\eta})
    =&\frac{1}{(2\pi)^n} \sum_{\mathbf a,\mathbf b\in \mathbb Z^{2n}} \delta_{S^{-1}\mathbf b,\mathbf a}\Tr(\hat\rho e^{i\ell[ \mathbf b,S\boldsymbol\eta-{\mathbf t}]+i\pi \mathbf{b}_{X}^{T}\mathbf{b}_{Z}}\hat T_{\mathbf b})\nonumber \\
    =&\frac{1}{(2\pi)^n} \sum_{\mathbf b\in \mathbb Z^{2n}} \Tr(\hat\rho e^{i\ell[ \mathbf b,S\boldsymbol\eta-{\mathbf t}]+i\pi \mathbf{b}_{X}^{T}\mathbf{b}_{Z}}\hat T_{\mathbf b})\nonumber \\
    =&W_{\hat \rho}(S\boldsymbol{\eta}-\mathbf t),
\end{align}
where we use in the first line that $\det S=1$.

\subsection{Measurements}
\label{appendix:measurements}

Using the traciality property of the Stratonovich-Weyl axioms given in Eq.~(\ref{eq:twirlingmap}), we see that we can recover the Born rule from the ZGW function whenever one of the operators satisfies $\hat A=\mathcal E(\hat A)$. 
 The measurements that we show to be simulatable, i.e., the POVM operators $\hat M_Z(\mathbf s)$, are indeed chosen because they are invariant under such twirling operation.
 
We now provide a proof that these operators are invariant under the twirling operation. 
    Consider $\hat T_{\mathbf a}$ such that $[\hat T_{\mathbf a},\hat P_{\mathbf 0}]=0$, where $\hat P_{\mathbf 0}$ is the GKP projector as defined in the main text. We find
    \begin{align}
        \mathcal E(\hat T_{\mathbf a})=&\int \dd \mathbf s \hat P_{\mathbf s}\hat T_{\mathbf a} \hat P_{\mathbf s}\nonumber \\
        =&\int \dd \mathbf s \hat T_{\mathbf s} \hat P_{\mathbf 0}\hat T_{-\mathbf s} \hat T_{\mathbf a} \hat T_{\mathbf s} \hat P_{\mathbf 0}\hat T_{-\mathbf s} \nonumber \\
        =&\int \dd \mathbf s e^{2\pi i [\mathbf a,\mathbf s]/d} \hat T_{\mathbf s} \hat P_{\mathbf 0}\hat T_{-\mathbf s}  \hat T_{\mathbf s}\hat T_{\mathbf a} \hat P_{\mathbf 0}\hat T_{-\mathbf s} \nonumber \\
        =&\int \dd \mathbf s e^{2\pi i [\mathbf a,\mathbf s]/d} \hat T_{\mathbf s} \hat P_{\mathbf 0}\hat T_{-\mathbf s}  \hat T_{\mathbf s} \hat P_{\mathbf 0}\hat T_{\mathbf a}\hat T_{-\mathbf s} \nonumber \\
        =&\int \dd \mathbf s  \hat T_{\mathbf s}  \hat P_{\mathbf 0}\hat T_{-\mathbf s} \hat T_{\mathbf a},
    \end{align}
    where in the last step we use that $\hat T_{-\mathbf s}\hat T_{\mathbf s}=\mathbbm 1$ and $\hat P_{\mathbf 0}\hat P_{\mathbf 0}=\hat P_{\mathbf 0}$ since it is a projector.
    Next, to show that this is equal to $\hat T_{\mathbf a}$, consider 
    \begin{align}
        \int \dd \mathbf s \hat T_{\mathbf s} \hat P_{\mathbf 0} \hat T_{-\mathbf s}=& \int \dd s_1\dots \int \dd s_{2n} \hat T_{s_1}\dots \hat T_{s_{2n}} \hat P_{\mathbf 0} \hat T_{-s_{2n}}\dots \hat T_{-s_{1}}\nonumber \\
        =& \int \dd s_1\dots \int \dd s_{2n} \hat T_{s_1}\dots \hat T_{s_{2n}} (\hat P_{\mathbf 0}^{(1)} \otimes \dots \otimes \hat P_{\mathbf 0}^{(n)}) \hat T_{-s_{2n}}\dots \hat T_{-s_{1}}\nonumber \\
        =& \int \dd s_1\int \dd s_2  \hat T_{s_1}\hat T_{s_2} \hat P_{\mathbf 0}^{(1)} \hat T_{-s_2}\hat T_{-s_1} \otimes \dots \otimes \int \dd s_{2n-1}\int \dd s_{2n}  \hat T_{s_{2n-1}}\hat T_{s_{2n}} \hat P_{\mathbf 0}^{(n)} \hat T_{-s_{2n}}\hat T_{-s_{2n-1}} 
    \end{align}
    where by a slight abuse of notation we write $\hat T_{s_j}=\hat T_{(0,\dots,0,s_j,0,\dots,0)^T}$.
    Furthermore, we have
    \begin{align}
        &\int \dd s_X\int \dd s_Z e^{-is_X\hat p}e^{is_Z \hat q} \hat P_{\mathbf 0}^{(1)} e^{-is_Z \hat q}e^{is_X\hat p}\nonumber \\
        =&\int \dd s_X\int \dd s_Z \sum_{n,n'=-\infty}^\infty \sum_{j=0}^d e^{-is_X\hat p}e^{is_Z \hat q} \ket{\ell(dn+j)}\bra{\ell(dn'+j)} e^{-is_Z \hat q}e^{is_X\hat p}\nonumber \\
        =&\int \dd s_X\int \dd s_Z \sum_{n,n'=-\infty}^\infty \sum_{j=0}^d e^{-is_X\hat p}e^{is_Z (\ell(dn+j))} \ket{\ell(dn+j)}\bra{\ell(dn'+j)} e^{-is_Z (\ell(dn'+j))}e^{is_X\hat p}\nonumber \\
        =&\int \dd s_X \sum_{n,n'=-\infty}^\infty \sum_{j=0}^d e^{-is_X\hat p}\ket{\ell(dn+j)}\bra{\ell(dn'+j)}\delta(\ell dn-\ell dn')e^{is_X\hat p}\nonumber \\
        =&\int \dd s_X \sum_{n=-\infty}^\infty \sum_{j=0}^d e^{-is_X\hat p}\ket{\ell(dn+j)}\bra{\ell(dn+j)}e^{is_X\hat p}\nonumber \\
        =&\mathbbm 1,
    \end{align}
    where the bra and ket are in the position basis.
    Combining these expressions, we see that $\mathcal E(\hat T_{\mathbf a})=\hat T_{\mathbf a}$.

    General logical GKP Pauli operators are defined as ${\hat T_{\mathbf a}^L=e^{\pi i\mathbf{a_X}^T\mathbf{a_Z}} \hat T_{\mathbf a}}$. Note that when restricting to $Z$-logical operator the phase factor is one.
    We can consider the measurement of the single-mode logical operator $\hat Z_L=e^{i\hat q \ell}$ by phase estimation. We can equivalently consider the measurement of this operator using the projector $\hat M_Z^{(1)}(s)=\frac{1}{d\ell}\sum_n e^{-ins\ell}e^{in\hat q\ell}$. The projector can be expressed as
    \begin{align}
        \hat M_Z^{(1)}(s)=&\frac{1}{d\ell}\sum_{n\in\mathbb Z} e^{-ins \ell}e^{in\hat q\ell}\nonumber \\
        =&\frac{1}{d\ell}\sum_n e^{in(\hat q-s)\ell}\nonumber \\
        =&\frac{1}{d\ell}\sum_n e^{2\pi in(\hat q-s)/(2\pi/\ell)}\nonumber \\
        =&\frac{1}{d\ell}\sum_n e^{2\pi in(\hat q-s)/(d\ell)}\nonumber \\
        =&\sum_n \delta(\hat q-s-\ell d n)\nonumber \\
        =&\sum_n \ket{s+\ell dn}\bra{s+\ell dn}.
    \end{align}
Note that, despite being unnormalizable, this is a valid Kraus operator. This can be seen by integrating the operator over $[0,d\ell)$, which gives the identity.
Now, we show that the measurement of the operator can be simulated using our algorithm.
We see immediately that $[\hat Z_L^n,\hat P_{\mathbf 0}]=0$ and therefore
\begin{align}
    \mathcal E(\hat M_Z^{(1)}(s))=&\frac{1}{d\ell}\sum_n e^{-is\ell n}\mathcal E(\hat Z_L^n)\nonumber \\
    =&\frac{1}{d\ell}\sum_n e^{-is\ell n}\hat Z_L^n\nonumber \\
    =&\hat M_Z(s).
\end{align}
This means we can evaluate
\begin{align}
    \text{PDF}(s)=\Tr(\hat\rho \hat M_Z^{(1)}(s))=\frac{1}{d\ell}\sum_n e^{-is\ell n}\Tr(\hat\rho \hat Z_L^n).
\end{align}
Consider the ZGW function of the operator $\hat M_Z^{(1)}(s)$. We have
\begin{align}
    W_{\hat M_Z^{(1)}(s)}\begin{pmatrix}u\\v\end{pmatrix}=&\frac{1}{2\pi} \sum_{m,m'\in\mathbb Z}\Tr(\hat M_Z^{(1)}(s) e^{-i( mv-um')\ell}e^{i\pi mm'}\hat T_{(m,m')^T})\nonumber  \\
    =&\frac{1}{2\pi} \frac{1}{d\ell}\sum_n e^{-is\ell n}\sum_{m,m'\in\mathbb Z}\Tr(\hat Z_L^n e^{-i( mv-um')\ell}e^{i\pi mm'}\hat T_{(m,m')^T}) \nonumber \\
    =&\frac{1}{2\pi} \frac{1}{d\ell}\sum_n e^{-is\ell n}\sum_{m,m'\in\mathbb Z}\Tr(e^{i\hat q\ell n} e^{-i (mv-um')\ell}e^{i\pi mm'}e^{-i\pi mm'/d}e^{-i\ell m'\hat p}e^{i\ell m\hat q})  \nonumber \\
    =&\frac{1}{2\pi} \frac{1}{d\ell}\sum_n e^{-is\ell n} \sum_{m,m'\in\mathbb Z}e^{i\pi mm'(d-1)/d}\int \dd x\bra{x} e^{-i (mv-um')\ell}e^{-i\ell m'\hat p}e^{i\ell (m+n)\hat q}\ket{x}\nonumber \\
    =&\frac{1}{2\pi} \frac{1}{d\ell}\sum_n e^{-is\ell n} \sum_{m,m'\in\mathbb Z}e^{i\pi mm'(d-1)/d}\int \dd x\bra{x} e^{-i (mv-um')\ell}e^{i\ell (m+n)x}\ket{x+\ell m'}\nonumber \\
    =& \sum_n e^{-is\ell n} \sum_{m,m'\in\mathbb Z}e^{i\pi mm'(d-1)/d}e^{-i (mv-um')\ell}\delta(\ell (m+n))\delta(\ell m')\nonumber \\
    \propto & \sum_n e^{-is\ell n} \sum_{m\in\mathbb Z}e^{-i mv\ell}\delta(\ell (m+n))\nonumber \\
    \propto & \sum_n e^{i\ell n(v-s)}\nonumber \\
    = & \sum_n e^{2\pi i n(v-s)/(2\pi \ell^{-1})}\nonumber \\
    \propto  & \sum_n \delta(v-s-2\pi n / \ell)\nonumber \\
    =  & \sum_n \delta(v-s- \ell d n)\nonumber \\
    \propto  & \delta(v-s),
\end{align}
where in the last line we use that $v$ is defined modulo $ \ell d$. Note that the measurement operator is not normalizable, so we do not expect the operator's Wigner function to be normalizable either. The distribution corresponding to the measurement outcomes for the modular measurement  of a single mode single mode is given by
\begin{align}
    \Tr(\hat\rho \hat M_Z^{(1)}(s))=&\int \dd u \int \dd v W_{\hat \rho}\begin{pmatrix}u\\v\end{pmatrix} W_{\hat M_Z^{(1)}(s)}\begin{pmatrix}u\\v\end{pmatrix}\nonumber \\
    \propto&\int \dd u \int \dd v W_{\hat \rho}\begin{pmatrix}u\\v\end{pmatrix} \delta( v-s)\nonumber \\
    =&\int \dd u W_{\hat\rho}\begin{pmatrix}u\\s\end{pmatrix}.
\end{align}

To sample from multiple modes we define the operator $\hat M_Z(\mathbf s)=\hat M_{Z,1}(s_1) \otimes \dots \otimes \hat M_{Z,n}(s_n)$. This has a Wigner function of the form
\begin{align}
    W_{\hat M_Z(\mathbf s)}({\boldsymbol\eta})\propto\sum_{\mathbf m\in \mathbb Z^n} \delta( {\boldsymbol\eta}_{\mathbf Z}-\mathbf s).
\end{align}
We, therefore, see that the multimode measurement statistics can be evaluated as
\begin{align}
    \Tr(\hat\rho \hat M_Z(\mathbf s))=&\int \dd {\boldsymbol\eta} W_{\hat \rho}({\boldsymbol\eta})  W_{\hat M_Z(\mathbf s)}({\boldsymbol\eta})\nonumber \\
    \propto&\int \dd {\boldsymbol\eta} W_{\hat \rho}\begin{pmatrix}{\boldsymbol\eta}\end{pmatrix} \delta( {\boldsymbol\eta}_{\mathbf Z}-\mathbf s)\nonumber \\
    =&\int \dd {\boldsymbol\eta}_{\mathbf X} W_{\hat\rho}\begin{pmatrix}{\boldsymbol\eta}_{\mathbf X}\\\mathbf s\end{pmatrix}.
\end{align}
However, note that this equation represents a probability distribution which is normalized over $\mathbf z$. Hence, we see that the right-hand side is correctly normalized since integrating both sides gives unity, as can be seen by the standardization property of Def.~\ref{def:modified-SW}. We therefore have that
\begin{align}
    \Tr(\hat\rho \hat M_Z(\mathbf s))=&\int \dd {\boldsymbol\eta}_{\mathbf X} W_{\hat\rho}\begin{pmatrix}{\boldsymbol\eta}_{\mathbf X}\\\mathbf s\end{pmatrix}.
\end{align}

\section{Wigner function of the tensor product of two states}
    \label{appendix:tensor-proof}
    In this section, we prove Eq.~(13). I.e., we show that the Wigner function of the tensor product of two states is equal to the product of the individual Wigner functions.
    
    Consider two states $\hat \rho,\hat \sigma$ defined over $m$ and $m'$ modes, with $n=m+m'$ total modes. We have
        \begin{align}
            W_{\hat \rho\otimes \hat \sigma}(\boldsymbol{\eta})= \Tr((\hat\rho\otimes \hat\sigma) A_{\boldsymbol{\eta}})
        \end{align}
        where 
        \begin{align}
        A_{\boldsymbol\eta}
           =&\frac{1}{(2\pi)^2}\sum_{\mathbf a\in\mathbb Z^{2n}} e^{i\ell[ \mathbf a,{\boldsymbol\eta}]+\pi \mathbf{a_X}^T\mathbf{a_Z}}\hat T_{\mathbf a}\nonumber \\
           =&\frac{1}{(2\pi)^2}\sum_{\mathbf a\in\mathbb Z^{2n}} e^{i\ell(\mathbf{a_X}\cdot {\boldsymbol\eta}_{\mathbf Z}-\mathbf{a_Z}\cdot{\boldsymbol\eta}_{\mathbf X})+\pi \mathbf{a_Z}\cdot\mathbf{a_X}}\hat T_{\mathbf a}\nonumber \\
           =&\frac{1}{(2\pi)^2}\sum_{a_1,a_2\in\mathbb Z^{2}} 
           T_{\mathbf a^{(1)}}\hat T_{\mathbf a^{(2)}}e^{i\pi (\mathbf {a_{Z}}^{(1)}\cdot \mathbf{a_{X}}^{(1)}+ \mathbf{a_{Z}}^{(2)}\cdot \mathbf{a_{X}}^{(2)})}\nonumber \\
           &\times e^{i\ell(\mathbf{a_X}^{(1)}\cdot {\boldsymbol\eta}_{\mathbf Z}^{(1)}+\mathbf{a_X}^{(2)}\cdot {\boldsymbol\eta}_{\mathbf Z}^{(2)}-\mathbf{a_Z}^{(1)}\cdot{\boldsymbol\eta}_{\mathbf X}^{(1)}-\mathbf{a_Z}^{(2)}\cdot{\boldsymbol\eta}_{\mathbf X}^{(2)})}\nonumber \\
           =&A_{\boldsymbol{\eta}^{(1)}}\otimes A_{\boldsymbol{\eta}^{(2)}}.
        \end{align}
       Therefore
        \begin{align}
            W_{\hat \rho\otimes\hat \sigma}(\boldsymbol{\eta})=& \Tr((\hat \rho\otimes\hat \sigma) (\hat A_{\boldsymbol{\eta}^{(1)}}\otimes \hat A_{\boldsymbol{\eta}^{(2)}}))\nonumber \\
            =& \Tr(\hat \rho \hat A_{\boldsymbol{\eta}^{(1)}}) \Tr(\hat \sigma \hat A_{\boldsymbol{\eta}^{(2)}})\nonumber \\
            =& W_{\hat \rho}(\boldsymbol{\eta}^{(1)})W_{\hat \sigma}(\boldsymbol{\eta}^{(2)}).
        \end{align}

\section{ZGW function of finitely squeezed GKP states}
\label{appendix:gkp-wigner-derivation}

In this Appendix, we derive the ZGW function for realistic, i.e. finitely squeezed, GKP states.

\subsection{Zero-logical state}

We use the following form of the GKP  state encoding the computational basis state $\ket j$ for odd $d$,
\begin{align}
    \psi_{\text{GKP},j}^{\Delta}(x)=\sum_{k\in \mathbb Z}e^{-\frac{1}{2}\Delta^2(j\ell+d\ell k)^2}e^{-\frac{1}{2\Delta^2}(x-j\ell -d\ell k)^2}.
\end{align}
We calculate the ZGW function of a pure state as~\cite{davis2024}
\begin{align}
\label{eq:gkp-wig-derivation-p1}
    W_{\hat \rho_0^\Delta}({\boldsymbol\eta})=&\frac{1}{2\pi} \sum_{\mathbf a\in \mathbb Z^{2}} \Tr(\hat\rho  e^{i\ell[\mathbf a,{\boldsymbol\eta}]+i\pi a_Xa_Z}\hat T_{\mathbf a})\nonumber \\
    =&\frac{1}{2\pi} \sum_{\mathbf a\in \mathbb Z^{2}} \Tr(\hat\rho  e^{i\ell[\mathbf a,{\boldsymbol\eta}]+i\pi a_Xa_Z}e^{i \pi a_Xa_Z/d}\hat T_{\mathbf{a_X}}\hat  T_{\mathbf{a_Z}})\nonumber \\
    =&\int \dd x\int \dd x'\frac{1}{2\pi} \sum_{\mathbf a\in \mathbb Z^{2}} \Tr(\psi(x)\psi^*(x')e^{\pi ia_Xa_Z/d}\ket{\hat q=x}\bra{\hat q=x'}\hat T_{a_X} \hat T_{a_Z}e^{i\ell[\mathbf a,{\boldsymbol\eta}]+i\pi a_Xa_Z})\nonumber \\
    =&\int \dd x\int \dd x'\frac{1}{2\pi} \sum_{\mathbf a\in \mathbb Z^{2}} \Tr(\psi(x)\psi^*(x')e^{\pi ia_Xa_Z/d}\ket{\hat q=x}\bra{\hat q=x'-\ell a_X}e^{ia_Z\ell( x'-\ell a_X)} e^{i\ell[\mathbf a,{\boldsymbol\eta}]+i\pi a_Xa_Z})\nonumber \\
    =&\int \dd x\frac{1}{2\pi} \sum_{\mathbf a\in \mathbb Z^{2}} \psi(x)\psi^*(x+\ell a_X)e^{\pi ia_Xa_Z/d}e^{ia_Z\ell x} e^{i\ell[\mathbf a,{\boldsymbol\eta}]+i\pi a_Xa_Z}.
    \end{align}

     Hence, we can substitute the wavefunction into this expression to get 
    \begin{align}
    W_{\hat \rho_0^\Delta}({\boldsymbol\eta})=&\sum_{k,k',a_X,a_Z\in \mathbb Z} \frac{1}{(2\pi)} e^{-\frac{1}{2}\Delta^2(j\ell+d\ell k)^2}e^{-\frac{1}{2}\Delta^2(j\ell+d\ell k')^2}e^{i\ell[\mathbf a,{\boldsymbol\eta}]+\frac{i2\pi}{d} \mathbf{a_X}^T\mathbf{a_Z}\frac{1}{2}(d+1)}\nonumber \\
    &\times \int \dd xe^{ixa_Z\ell} e^{-\frac{1}{2\Delta^2}(x-j\ell -d\ell k)^2}e^{-\frac{1}{2\Delta^2}(x+\ell a_X-j\ell -d\ell k')^2}  \nonumber \\
    =&\sum_{k,k',a_X,a_Z\in \mathbb Z} \frac{1}{(2\pi)} e^{-\frac{1}{2}\Delta^2(j\ell+d\ell k)^2}e^{-\frac{1}{2}\Delta^2(j\ell+d\ell k')^2}e^{i\ell a_X{\boldsymbol\eta}_{\mathbf Z}-i\ell a_Z{\boldsymbol\eta}_{\mathbf X}+\frac{i2\pi}{d} \mathbf{a_X}^T\mathbf{a_Z}\frac{1}{2}(d+1)}\nonumber \\
    &\times \sqrt{\pi}|\Delta|e^{\frac 1 4\left(2ia_Z\ell (-a_X\ell+2j\ell +dk\ell +dk'\ell)-\frac{(a_X\ell+dk\ell-dk'\ell)^2}{\Delta^2}-a_Z^2\ell^2\Delta^2\right)}\nonumber \\
    \propto&\sum_{t \in \mathbb Z^4}e^{i\pi t^T\Gamma t+2\pi i t^T r}
\end{align}
where we have defined $t=(a_X,a_Z,k,k')$ and $r=(\ell {\boldsymbol\eta}_{\mathbf Z},-\ell {\boldsymbol\eta}_{\mathbf X},0,0)$ and by inspection we see that
\begin{align}
    i\pi\Gamma=\begin{pmatrix}-\frac{\ell^2}{4\Delta^2}&\frac{\pi i}{2}&-\frac{1}{4\Delta^2}d\ell^2&\frac{1}{4\Delta^2}d\ell^2\\
    \frac{\pi i}{2}&-\frac{1}{4}\ell^2\Delta^2&\frac{1}{4}id\ell^2&\frac{1}{4}i\ell^2 d\\
    -\frac{1}{4\Delta^2}d\ell^2&\frac{1}{4}id\ell^2&-\frac{1}{2}\Delta^2d^2\ell^2-\frac{1}{4\Delta^2}d^2\ell^2&-\frac{1}{4\Delta^2}d^2\ell^2\\
    \frac{1}{4\Delta^2}d\ell^2&\frac{1}{4}id\ell^2&-\frac{1}{4\Delta^2}d^2\ell^2&-\frac{1}{2}\Delta^2d^2\ell^2-\frac{1}{4\Delta^2}d^2\ell^2\end{pmatrix}.
\end{align}

Hence
\begin{align}
    \Gamma=&\frac{i}{\pi}\begin{pmatrix}\frac{\ell^2}{4\Delta^2}&-\frac{\pi i}{2}&\frac{1}{4\Delta^2}d\ell^2&-\frac{1}{4\Delta^2}d\ell^2\\
    -\frac{\pi i}{2}&\frac{1}{4}\ell^2\Delta^2&-\frac{1}{4}id\ell^2&-\frac{1}{4}i\ell^2 d\\
    \frac{1}{4\Delta^2}d\ell^2&-\frac{1}{4}id\ell^2&\frac{1}{2}\Delta^2d^2\ell^2-\frac{1}{4\Delta^2}d^2\ell^2&\frac{1}{4\Delta^2}d^2\ell^2\\
    -\frac{1}{4\Delta^2}d\ell^2&-\frac{1}{4}id\ell^2&\frac{1}{4\Delta^2}d^2\ell^2&\frac{1}{2}\Delta^2d^2\ell^2-\frac{1}{4\Delta^2}d^2\ell^2\end{pmatrix}\nonumber \\
    =&\frac{1}{2}\begin{pmatrix}\frac{i}{d\Delta^2}&1&\frac{i}{\Delta^2}&-\frac{i}{\Delta^2}\\
    1&\frac{i\Delta^2}{d}&1&1\\
    \frac{i}{\Delta^2}&1&\frac{id}{\Delta^2}(1+2\Delta^4)&-\frac{id}{\Delta^2}\\
    -\frac{i}{\Delta^2}&1&-\frac{id}{\Delta^2}&\frac{id}{\Delta^2}(1+2\Delta^4)\end{pmatrix}.
\end{align}

\subsection{Phase state}
We consider the phase state $\ket{\psi_\pi}=\frac{1}{\sqrt 3}(\ket{0_L}+\ket{1_L}-\ket{2_L})$ which is able to inject a non-Clifford unitary gate and promote the otherwise Clifford circuit to universality \cite{anwar2012}.

The wavefunction of the realistic state is of the form
\begin{align}
     \psi_{\text{GKP},\pi}^{\Delta}(x)=\sum_{j=0}^2 (1-2\delta_{j,2})\sum_{k\in \mathbb Z}e^{-\frac{1}{2}\Delta^2(j\ell+d\ell k)^2}e^{-\frac{1}{2\Delta^2}(x-j\ell -d\ell k)^2}.
\end{align}

The ZGW function is
\begin{align}
\label{eq:gkp-wig-derivation-p2}
    W_{\hat\rho_\pi^\Delta}({\boldsymbol\eta})=&\int \dd x \frac{1}{(2\pi)} \sum_{\mathbf a\in \mathbb Z^{2n}} \sum_{j=0}^2 (1-2\delta_{j,2})\sum_{k\in \mathbb Z}e^{-\frac{1}{2}\Delta^2(j\ell+d\ell k)^2}e^{-\frac{1}{2\Delta^2}(x-j\ell -d\ell k)^2}\nonumber \\
    &\times \sum_{j'=0}^2 (1-2\delta_{j',2})\sum_{k'\in \mathbb Z}e^{-\frac{1}{2}\Delta^2(j'\ell+d\ell k')^2}e^{-\frac{1}{2\Delta^2}(x+\ell a_X-j'\ell -d\ell k')^2}e^{ixa_Z\ell} e^{i\ell[\mathbf a,{\boldsymbol\eta}]+\frac{i2\pi}{d} \mathbf{a_X}^T\mathbf{a_Z}\frac{1}{2}(d+1)}\nonumber \\
    \propto& \frac{1}{(2\pi)} \sum_{\mathbf a\in \mathbb Z^{2n}} \sum_{j,j'=0}^2 (1-2\delta_{j,2})(1-2\delta_{j',2})\sum_{k,k'\in \mathbb Z}e^{-\frac{1}{2}\Delta^2(j\ell+d\ell k)^2}e^{-\frac{1}{2}\Delta^2(j'\ell+d\ell k')^2} e^{i\ell[\mathbf a,{\boldsymbol\eta}]+\frac{i2\pi}{d} \mathbf{a_X}^T\mathbf{a_Z}\frac{1}{2}(d+1)}\nonumber \\
    &\times  e^{\frac{1}{4}\left(2ia_Z\ell^2(-a_X +j+j'+dk+dk')-\frac{\ell^2(a_X+j-j'+dk-dk')^2}{\Delta^2}-a_Z^2\ell^2\Delta^2\right)}\nonumber \\
    \propto& \frac{1}{(2\pi)} \sum_{j,j'=0}^2 (1-2\delta_{j,2})(1-2\delta_{j',2})\sum_{t\in\mathbb Z^4}e^{i\pi t^T\Gamma^{(j,j')}t+2\pi it^Tr^{(j,j')}}.
    \end{align}
\end{widetext}
\end{document}